\documentclass[sigconf,natbib=false,screen,nonacm]{acmart}

\pdfoutput=1
\makeatletter
\def\@parfont{\bfseries}
\makeatother

\usepackage{xifthen}
\usepackage[utf8]{inputenc}
\usepackage[T1]{fontenc}
\makeatletter
\newcommand\ifBeamerThenElse[2]{\@ifclassloaded{beamer}{#1}{#2}}
\makeatother

\newcommand{\usepackageIfBeamer}[2][]{\ifthenelse{\isempty{#1}}{\ifBeamerThenElse{\usepackage{#2}}{}}{\ifBeamerThenElse{\usepackage[#1]{#2}}{}}}
\newcommand{\usepackageIfNotBeamer}[2][]{\ifthenelse{\isempty{#1}}{\ifBeamerThenElse{}{\usepackage{#2}}}{\ifBeamerThenElse{}{\usepackage[#1]{#2}}}}

\usepackage{pifont}
\usepackage{import}
\usepackage{pdfpages}

\usepackage{amsmath}
\usepackage{amsthm}

\usepackage{pifont}
\usepackage{array}
\usepackage{siunitx}
\usepackage[operators, sets, lambda]{cryptocode}

\usepackage{graphicx}

\usepackage{tikz}
\usetikzlibrary{shapes,arrows,trees,matrix,positioning,decorations.pathreplacing,calc}
\usepackage{tcolorbox}

\usepackage{subcaption}

\usepackage{tabularx}
\usepackage{booktabs}

\usepackage{listings}

\usepackage{hyperref}
\hypersetup{
    colorlinks=true,
    allcolors=black
}
\usepackage[htt]{hyphenat}
\usepackage{csquotes}
\usepackage{enumerate}
\usepackage[normalem]{ulem}
\usepackage{todonotes}
\usepackage{enumitem}
\usepackage{acronym}
\usepackage{microtype}
\usepackage{caption}
\RequirePackage[
    datamodel=acmdatamodel,
    style=acmnumeric,
]{biblatex}
\usepackage{xurl}
\usepackage[capitalise, noabbrev]{cleveref}
\pcfixcleveref

\setlist[itemize]{nosep,leftmargin=*}
\setlist[enumerate]{nosep,leftmargin=*}

\ifBeamerThenElse{}{\theoremstyle{definition}
    \newtheorem{defn}{Definition}
    \crefname{defn}{Definition}{Definitions}

    \theoremstyle{plain}
    \newtheorem{prop}{Proposition}
    \crefname{prop}{Proposition}{Propositions}

    \theoremstyle{definition}
    \newtheorem{rem}{Remark}
    \crefname{rem}{Remark}{Remarks}
}

\DeclareSourcemap{
    \maps[datatype=bibtex,overwrite]{
        \map{
            \step[fieldset=pages, null]
            \step[fieldset=volume, null]
            \step[fieldset=number, null]
            \step[fieldset=doi, null]
            \step[fieldset=url, null]
            \step[fieldset=urldate, null]
            \step[fieldset=issn, null]
            \step[fieldset=isbn, null]
            \step[fieldset=publisher, null]
            \step[fieldset=address, null]
            \step[fieldset=location, null]
            \step[fieldset=editor, null]
            \step[fieldset=series, null]
            \step[fieldset=month, null]
            \step[fieldset=bibsource, null]
            \step[fieldset=biburl, null]
            \step[fieldset=eventdate, null]
            \step[fieldset=eventlocation, null]
            \step[fieldset=timestamp, null]
        }\map{
            \step[fieldsource=doi,
                match={\\_},
            replace={_}]
            \step[fieldsource=url,
                match={\\_},
            replace={_}]
        }
        \map{\step[fieldset=booktitle, fieldsource=booktitle,
                match=\regexp{Conference\s+on\s+Computer\s+and\s+Communications\s+Security},
            final=true]
            \step[fieldset=booktitle, fieldvalue=ACM CCS, origfieldval=false]
        }
        \map{
            \step[fieldset=booktitle, fieldsource=booktitle,
                match=\regexp{IEEE\s+Symposium\s+on\s+Security\s+and\s+Privacy},
            final=true]
            \step[fieldset=booktitle, fieldvalue={IEEE S\&P}, origfieldval=false]
        }
        \map{
            \step[fieldset=booktitle, fieldsource=booktitle,
                match=\regexp{USENIX\s+Security},
            final=true]
            \step[fieldset=booktitle, fieldvalue=USENIX Security, origfieldval=false]
        }
        \map{
            \step[fieldset=booktitle, fieldsource=booktitle,
                match=\regexp{Network\s+and\s+Distributed\s+System\s+Security},
            final=true]
            \step[fieldset=booktitle, fieldvalue=ISOC NDSS, origfieldval=false]
        }
        \map{
            \step[fieldset=booktitle, fieldsource=booktitle,
                match=\regexp{EUROCRYPT},
            final=true]
            \step[fieldset=booktitle, fieldvalue=IACR EUROCRYPT, origfieldval=false]
        }
        \map{
            \step[fieldset=booktitle, fieldsource=booktitle,
                match=\regexp{ASIACRYPT},
            final=true]
            \step[fieldset=booktitle, fieldvalue=IACR ASIACRYPT, origfieldval=false]
        }
        \map{
            \step[fieldset=booktitle, fieldsource=booktitle,
                match=\regexp{CRYPTO},
            final=true]
            \step[fieldset=booktitle, fieldvalue=IACR CRYPTO, origfieldval=false]
        }
        \map{
            \step[fieldset=booktitle, fieldsource=booktitle,
                match=\regexp{LATINCRYPT},
            final=true]
            \step[fieldset=booktitle, fieldvalue=IACR LATINCRYPT, origfieldval=false]
        }
        \map{
            \step[fieldset=booktitle, fieldsource=booktitle,
                match=\regexp{Public-Key\s+Cryptography},
            final=true]
            \step[fieldset=booktitle, fieldvalue=IACR PKC, origfieldval=false]
        }
        \map{
            \step[fieldset=booktitle, fieldsource=booktitle,
                match=\regexp{Public-Key\s+Cryptography},
            final=true]
            \step[fieldset=booktitle, fieldvalue=IACR PKC, origfieldval=false]
        }
        \map{
            \step[fieldset=booktitle, fieldsource=booktitle,
                match=\regexp{Public-Key\s+Cryptography},
            final=true]
            \step[fieldset=booktitle, fieldvalue=IACR PKC, origfieldval=false]
        }
        \map{
            \step[fieldset=booktitle, fieldsource=booktitle,
                match=\regexp{Public\s+Key\s+Cryptography},
            final=true]
            \step[fieldset=booktitle, fieldvalue=IACR PKC, origfieldval=false]
        }
        \map{
            \step[fieldset=booktitle, fieldsource=booktitle,
                match=\regexp{Security\s+and\s+Cryptography\s+for\s+Networks},
            final=true]
            \step[fieldset=booktitle, fieldvalue=SCN, origfieldval=false]
        }
        \map{
            \step[fieldset=booktitle, fieldsource=booktitle,
                match=\regexp{WTSC},
            final=true]
            \step[fieldset=booktitle, fieldvalue=WTSC, origfieldval=false]
        }
        \map{
            \step[fieldset=journal, fieldsource=journal,
                match=\regexp{Proc.\s+Priv.\s+Enhancing\s+Technol.},
            final=true]
            \step[fieldset=journal, fieldvalue=PoPETS, origfieldval=false]
        }
        \map{
            \step[fieldset=booktitle, fieldsource=booktitle,
                match=\regexp{Financial\s+Cryptography},
            final=true]
            \step[fieldset=booktitle, fieldvalue=FC, origfieldval=false]
        }
    }
}

\newcommand{\subheading}[1]{\vspace{0.5 em}\noindent \textbf{#1}}

\newcommand*{\emptyCircle}{
    \begin{tikzpicture}[scale=0.11]\draw (0,0) circle (1);
\end{tikzpicture}}
\newcommand*{\halfCircle}{
    \begin{tikzpicture}[scale=0.11]\draw (0,0) circle (1);
        \fill (0,0) -- (90:1) arc (90:90-50*3.6:1) -- cycle;
\end{tikzpicture}}
\newcommand*{\fullCircle}{
    \begin{tikzpicture}[scale=0.11]\draw (0,0) circle (1);
        \fill (0,0) -- (90:1) arc (90:90-100*3.6:1) -- cycle;
\end{tikzpicture}}

\newcommand\cmark{\fullCircle}
\newcommand\xmark{\emptyCircle}

\newcommand\specialfont[1]{{\texttt{#1}}}

\newcommand{\belowLine}{4pt}

\acrodef{prim}[PVC]{Permissioned Vector Commitment}
\acrodef{model}[PPoL]{Permissioned Proof of Liabilities}

\renewcommand\Pr{\mathrm{Pr}}

\newcommand\pccheck{\text{\textbf{check} }}
\newcommand\pcComment[1]{\quad\ \text{// #1}}
\newcommand\pcCommentLine[1]{\text{// #1}}

\newcommand\domain{H}
\newcommand\subgroup{\domain}
\newcommand\lag{\ell}
\newcommand\perm{\alpha}
\newcommand\hash{\specialfont{H}}
\newcommand\groupHash{\hash}
\newcommand\fieldHash{\hash'}
\newcommand\sig{\sigma}
\newcommand\sk{\specialfont{sk}}
\newcommand\pk{\specialfont{pk}}
\newcommand\apk{\specialfont{apk}}
\newcommand\dbPoly{v}
\newcommand\dbCom{V}
\newcommand\keyPoly{s}
\newcommand\keyCom{S}
\newcommand\dbProof{\pi}
\newcommand\keyProof{\varphi}
\newcommand\aux{\specialfont{mem}}

\newcommand\sigPoly{f}
\newcommand\sigCom{F}
\newcommand\binPoly{b}
\newcommand\binCom{B}
\newcommand\zeroQuo{q}
\newcommand\zeroQuoCom{Q}
\newcommand\apkProof{\pi_{\apk}}
\newcommand\binQuo{u}
\newcommand\binQuoCom{U}
\newcommand\apkQuo{t}
\newcommand\apkQuoCom{T}
\newcommand\apkRem{r}
\newcommand\apkRemCom{R}
\newcommand\apkRemDeg{p}
\newcommand\apkRemDegCom{P}
\newcommand\lagQuo{L}
\newcommand\lagProof{\lagQuo}
\newcommand\newKeyProof{\psi}
\newcommand\total{z}
\newcommand\Total{Z}
\newcommand\totalCom{\Total}
\newcommand\maskPoly{w}
\newcommand\maskCom{W}
\newcommand\polQuo{q'}
\newcommand\polQuoCom{Q'}
\newcommand\sumProof{\pi_\mathrm{sum}}
\newcommand\polBin{d}
\newcommand\polBinCom{D}
\newcommand\rangeProof{\pi_\mathrm{range}}

\newcommand\rangeQuoCom{E}
\newcommand\RangeProve{\specialfont{Range.Prove}}
\newcommand\RangeVerify{\specialfont{Range.Verify}}

\newcommand\KeyGen{\specialfont{KeyGen}}
\newcommand\Setup{\specialfont{Setup}}
\newcommand\Sign{\specialfont{Sign}}

\newcommand\VerifyLookup{\specialfont{VerifyLU}}
\newcommand\VerifyLU{\VerifyLookup}

\newcommand\Publish{\specialfont{EndEpoch}}
\newcommand\CloseEpoch{\Publish}
\newcommand\EndEpoch{\Publish}
\newcommand\VerifyPK{\specialfont{VerifyPK}}
\newcommand\UpdatePK{\specialfont{AddPK}}
\newcommand\AddPK{\UpdatePK}
\newcommand\VerifySig{\specialfont{VerifySig}}
\newcommand\UpdateDB{\specialfont{UpdateDB}}

\newcommand\VerifyDB{\specialfont{VerifyDB}}
\newcommand\VerifyKeys{\specialfont{VerifyKeys}}
\newcommand\VerifyHints{\specialfont{VerifyHints}}
\newcommand\VerifyHelpers{\VerifyHints}
\newcommand\VerifyAppend{\specialfont{VerifyAppend}}
\newcommand\APKVerify{\specialfont{APK.Verify}}
\newcommand\primUpdateDB{\specialfont{PVC.UpdateDB}}
\newcommand\primEndEpoch{\specialfont{PVC.EndEpoch}}
\newcommand\primVerifyDB{\specialfont{PVC.VerifyDB}}

\newcommand\pp{\specialfont{pp}}
\newcommand\KZGSetup{\specialfont{AMT.Setup}}

\newcommand\KZGCommit{\specialfont{AMT.Commit}}

\newcommand\AMTGetProof{\specialfont{AMT.Open}}
\newcommand\AMTOpen{\AMTGetProof}

\newcommand\AMTVerify{\specialfont{AMT.Verify}}
\newcommand\AMTMaintain{\specialfont{AMT.Update}}

\newcommand\gameOne{\mathcal{E}_1}
\newcommand\gameTwo{\mathcal{E}_2}
\newcommand\gameThree{\mathcal{E}_3}
\newcommand\gameFour{\mathcal{E}_4}
\newcommand\gameFive{\mathcal{E}_5}
\newcommand\gameDbSec{\gameOne}
\newcommand\gameZkReal{\gameTwo}
\newcommand\gameZkSim{\gameThree}
\newcommand\gameUf{\gameFour}
\newcommand\gameSingleRound{\gameFive}
\newcommand\advA{\mathcal{A}}
\newcommand\advB{\mathcal{B}}
\newcommand\Sim{\mathcal{S}}
\newcommand\expList{\mathcal{L}}
\begin{document}

\title{Mitigating Collusion in Proofs of Liabilities}

\author{Malcom Mohamed}
\affiliation{\institution{Ruhr University Bochum}
    \city{Bochum}
    \country{Germany}
}
\email{malcom.mohamed@rub.de}
\orcid{0000-0002-9865-1244}

\author{Ghassan Karame}
\affiliation{\institution{Ruhr University Bochum}
    \city{Bochum}
    \country{Germany}
}
\email{ghassan@karame.org}
\orcid{0000-0002-2828-4071}

\begin{abstract}
    Cryptocurrency exchanges use proofs of liabilities (PoLs) to prove to their customers their liabilities committed on-chain, thereby enhancing their trust in the service.
    Unfortunately, a close examination of currently deployed and academic PoLs reveals significant shortcomings in their designs.
    For instance, existing schemes cannot resist realistic attack scenarios in which the provider colludes with an existing user.

    In this paper, we propose a new model, dubbed permissioned PoL, that addresses this gap by not requiring cooperation from users to detect a dishonest provider's potential misbehavior.
    At the core of our proposal lies a novel primitive, which we call Permissioned Vector Commitment (PVC), to ensure that a committed vector only contains values that users have explicitly signed.
    We provide an efficient PVC and PoL construction that carefully combines homomorphic properties of KZG commitments and BLS-based signatures.
    Our prototype implementation shows that, despite the stronger security, our proposal also improves server performance (by up to $10\times$) compared to prior PoLs.
\end{abstract}

\maketitle
\settopmatter{printfolios=true}
\settopmatter{printacmref=false}

\section{Introduction}\label{sec:intro}
Cryptocurrency users commonly rely on centralized service providers (centralized exchanges, CEXs) to convert between cryptocurrencies or avoid the burden of keeping valuable private keys in personal custody.
In recent years, numerous accounts have highlighted instances where CEXs have abruptly declared bankruptcy, leaving their users without any prior warning \cite{coindeskFTX}.
To enhance their trustworthiness, major CEXs, such as Binance~\cite{binanceImpl}, Kraken~\cite{krakenImpl}, OKX~\cite{okxImpl} and Gate.io~\cite{gateioImpl}, currently offer mechanisms intended to guarantee the availability and liquidity of their stored funds, referred to as \emph{proofs of solvency} or \emph{proofs of reserve}.
These proofs aim to enable decentralized solvency audits in the sense that the collective of users takes up a role traditionally reserved for private third-party auditors.
Currently, these proofs of solvency consist of the CEXs
\textbf{(1)} publishing a cryptographic commitment of all customers' account balances, where users may check their individual balances' inclusion,
\textbf{(2)} proving that the sum of balances committed is at most a given value $X$ and \textbf{(3)} proving that they own assets worth at least $X$ units.
Steps 1 and 2 combined are also referred to as \emph{proof of liabilities (PoL)} and are the focus of this paper.
Given a PoL, step 3---also called \emph{proof of assets}---can be instantiated as a sign-off from a trusted auditor, the publication of a list of the exchange's owned blockchain addresses and/or zero-knowledge techniques \cite{gotzilla}.
The outcome of proofs of solvency are statements like \enquote{$p\%$ of deposits are fully backed} (usually, $p\geq100$) that are regularly published.

Unfortunately, an examination of currently deployed and academic PoL designs \cite{maxwell,dapolplus,notus}, reveals several shortcomings that prevent their large-scale adoption in real-world deployments.
First and foremost, state-of-the-art PoLs, such as Notus~\cite{notus}, are designed to ensure that a malicious service provider cannot underreport the aggregate liabilities of their users.
However, they cannot guarantee that this aggregate truly reflects the current balances of \emph{all} users.
The only means in existing PoLs to ensure that the balances are accurately reflected is for every user to independently verify the inclusion of their own balances in the proof---a requirement that burdens users and compels providers to adopt long reporting intervals to keep the process reasonably user-friendly.

Beyond its obvious inconvenience for users, such a design is inherently flawed since \emph{it cannot deter collusion between users and the service provider}. Consider the realistic example where Alice and Bob both use an exchange attempting to conceal its insolvency. Just before the next PoL audit, the provider could coerce Alice into colluding without risk of detection. Namely, the provider would produce a proof with a reduced balance for Alice (e.g., by setting the balance of Alice to 0), thereby reducing the reported total liabilities to a level it can cover. In this scenario, such a collusion can invalidate the entire PoL. In return, Alice could receive her full balance, perhaps with additional benefits/interests, immediately after the reporting period.

We argue that this limitation significantly hinders the usefulness of current PoLs, as CEXs have hundreds of thousands of users.
One should not expect that \emph{all} those users do not deviate from their honest behavior in the process. Moreover, colluding users can even retain plausible deniability by claiming that they forgot to perform the check and were merely victims of the malicious provider.
Another important limitation of existing proposals lies in their computational cost, with some requiring a few minutes to create a proof. For instance, proving the liabilities of 2000 users takes around 180 seconds in the state-of-the-art proposal by~\cite{notus}.

In this work, we address these gaps and propose a fundamentally new PoL model, dubbed permissioned PoL (\acs{model}), that addresses these limitations by not requiring cooperation from users to detect potential misbehavior by a malicious provider.
Our model does {not} allow the provider, in principle, to tamper with users' balance values and does {not} burden users with constantly performing checks to detect potential tampering {after the fact}.
Instead, our model enforces a permission policy by which tampering is prevented {a priori}.
At the core of our proposal lies a novel abstraction, which we dub \acf{prim}, that ensures that a vector commitment only contains values the users have signed.
The \acs{prim} formalism captures precisely the PKI integration required for a \acs{model}.
A \acs{prim} is beneficial in situations like the proof of liabilities setting, where a centralized provider performs a global computation over an outsourced database and outputs the result and a proof of correctness.
We show that this functionality can be efficiently instantiated and that an efficient \acs{model} can be built on top of such an instantiation.
The proposed solution combines recent ideas (like AMTs \cite{tomescu20} and APK proofs \cite{das23}) with a novel quotient-based signature aggregation technique and an append proof for committed secrets. We implement a prototype with code publicly available at \cite{githubRepo}.
Our evaluation shows that our scheme---despite its stronger security---scales better than prior work (by up to $10\times$).

\subheading{Contributions.}
In summary, we make the following contributions:
\begin{description}
    \item[Gaps in Existing PoL solutions: ]
        We analyze currently commercialized and academic PoL designs and show that they suffer from significant security limitations that hinder their usefulness for cryptocurrency users (\cref{sec:liabilities}).
        We then propose a new PoL notion, the \acf{model}, that aims to overcome these limitations.
    \item[\acf{prim}: ]
        We introduce the \acf{prim} as an abstraction that cleanly captures precisely the PKI integration required for a \ac{model} (\cref{sec:prim}).
        We argue that the \ac{prim} is a meaningful independent abstraction that also strengthens other transparency systems besides PoLs.
    \item[Efficient \ac{prim}: ]
        We construct an efficient \ac{prim} using KZG commitments, carefully combining recent ideas (like AMTs and APK proofs) with a novel quotient-based signature aggregation technique and an append proof for committed secrets (\cref{sec:prelims,sec:primConstruct}).
    \item[Design \& Implementation of an Efficient \ac{model}: ]
        We design and implement a \ac{prim}-based \ac{model}.
        Our evaluation (\cref{sec:pol}) shows that our scheme---despite its stronger security---scales better than prior work (by up to $10\times$).
\end{description}

\section{Proofs of Liabilities}\label{sec:survey}\label{sec:related}\label{sec:liabilities}

\subsection{System Model}\label{sec:basicPoL}

Proof of Liabilities (PoL) systems involve three roles (and an implicit \emph{Judge}): users, a provider and auditors \cite{provisions,notus}, with anyone able to act as an auditor.
The provider maintains broadcast communication with users and auditors and private channels with individual users.
It manages a database mapping each user to their account balance, which remains hidden from other users and auditors.

The system operates in epochs.
During each epoch, users can contact the provider to request updates to their respective account balances.
At the end of an epoch, the provider publishes a cryptographic commitment to the database and (a commitment to) the sum of all users' balances.
The standard threat model of PoL has trusted users, trusted auditors and an untrusted provider---though in this paper, we will also consider dishonest users.
A malicious provider may attempt to output a (commitment to a) sum of liabilities that is \emph{lower} than the actual sum of users' balances.
To prevent this, each newly published database commitment must pass two kinds of checks.
First, users request inclusion proofs for their individual balances.
Second, auditors verify that the committed total matches the database.
In some designs \cite{maxwell,falzon23}, this check requires users' private balances, forcing them to perform it themselves and removing the auditor role.

An underlying assumption is that users and auditors have a \emph{Judge} to complain to if their local checks detect misbehavior on part of the provider.
Sufficiently many cases of
users detecting a wrong balance at their database entry or of
auditor verification failing would trigger punitive action against the provider.

\subsection{Existing Schemes}\label{sec:existing}

\subheading{Summation trees.}
In Maxwell and Todd's design \cite{maxwell}, the database commitment is a special Merkle tree in which leaves stores account balances and each inner node stores not only the hash of its two children but also the sum of the children's associated balance values.
The root node stores the sum of all balances.
Then, each user's inclusion proof contains all the intermediate sums stored in the tree's nodes.
Any malicious underreporting of total liabilities would mean a user's balance was not included, which would be discovered when that user asked for an inclusion proof.
The scheme has no auditor role.
Each inclusion proof leaks partial information about other users' balances.

We describe three works from the literature that improve on Maxwell and Todd's design \cite{maxwell} (\cref{sec:existing}).
The Provisions scheme \cite{provisions} addresses the privacy leakage by replacing the Merkle tree with a list of hiding Pedersen commitments.
Instead of publishing the Merkle root and sum of balances, the entire list of commitments is published.
Each user $i$ checks that the $i$th commitment contains the user's balance.
Then, anyone can homomorphically compute a privacy-preserving commitment to the sum of all balances.
The DAPOL+ scheme \cite{dapolplus} combines hiding commitments with the Merkle tree approach in order to reduce the published data back down to a single Merkle root (instead of a list of commitments like Provisions).
The scheme from Falzon et al. \cite{falzon23} reduces the size of inclusion proofs by replacing the DAPOL+ Merkle tree with higher-arity vector commitment trees combined with range proofs and inner-product arguments.
We refer to \cite{chalkiasBroken} for a discussion of further schemes in this line of research.

\def\xDistance{2.4}
\newcommand{\otbAttack}[5]{
    \node (title) at (1.335*\xDistance,1.75) {#1};
    \node[align=left] (alice1) at (0,1) {Real liability\\to Alice: $a$};
    \node[align=left,below=of alice1] (bob1) {Real liability\\to Bob: $b$};
    \node[draw,align=left] (comp1) at (\xDistance,0.5) {Claimed\\total:\\#2};
    \node[draw,align=left] (verify1) at (1.95*\xDistance,0.5) {Honest users\\#3{}check};
    \node[align=left] (result1) at (2.67*\xDistance,0.5) {\textbf{\checkmark}};
    \node (comment1) at (1.0*\xDistance,-0.5) {#4};
    \node (comment2) at (2.45*\xDistance,-0.5) {#5};
    \draw[->] (alice1) -- (comp1);
    \draw[->] (bob1) -- (comp1);
    \draw[->] (comp1) -- (verify1);
    \draw[->] (verify1) -- (result1);
}
\begin{figure*}[!t]
    \centering
    \begin{tikzpicture}[node distance=0.2cm and 1.0cm]
        \otbAttack{\textbf{Epoch 1:} Alice and Bob are honest.}{$a+b$}{}{\phantom{abcd}}{}
    \end{tikzpicture}
    \vrule
    \begin{tikzpicture}[node distance=0.2cm and 1.0cm]
        \otbAttack{\textbf{Epoch 2:} Alice makes an \emph{off-the-books deal}.}{$b$}{(Bob) }{($< a+b$)}{(\textbf{Attack})}
    \end{tikzpicture}
    \caption{OTB attack, undetectable by prior PoLs.
        Meanwhile, Alice is risk-free.
    \label{fig:otb}}
\end{figure*}
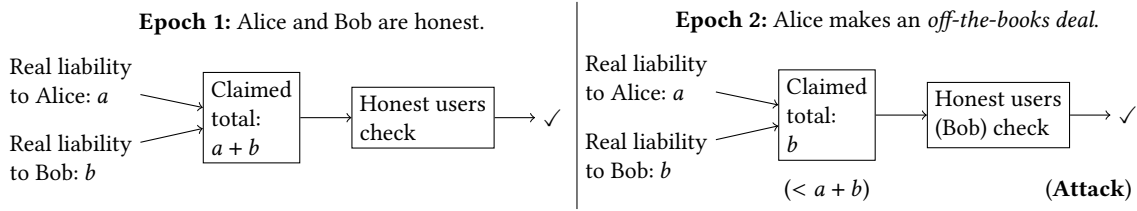

\subheading{SNARKs.}
Next, we describe the PoL that is deployed in production by Binance \cite{binanceImpl}.
Other real-world deployments can be seen as adaptations of the Binance scheme \cite{okxImpl,gateioImpl,krakenImpl}.
The Binance design differs from the previously described proposals because it is based on general-purpose SNARKs instead of summation trees.
Specifically, Binance uses a standard Merkle tree for the user balances.
Users' individual inclusion checks, therefore, do not implicitly (partially) verify the sum of liabilities like in the Maxwell-Todd-based designs.
Instead, the provider creates a zkSNARK that the elements in the Merkle tree leaves add up to the claimed (committed) sum value.\footnote{The SNARK also has other features that are not relevant for this paper, e.g., it supports liabilities in multiple currencies.} With zkSNARKs, it is possible to let an auditor succinctly verify the summation without any private information.
This is in contrast to the previously discussed summation-tree schemes.
In those, to verify that the summation is carried out correctly, \emph{all users' proofs} would have to be re-checked by the auditor.
However, the SNARK adds a non-negligible overhead for the provider as, reportedly, a 32-core machine takes around 1 minute to create a proof for 700 users.

A recent preprint \cite{xiezhi} presents Xiezhi, a SNARK-based PoL design with independent publicly verifiable summation proofs. Unlike Binance’s scheme, it avoids Merkle trees and general-purpose SNARKs, relying instead on specialized polynomial commitment techniques.

\subheading{Notus.}
A recent advancement is due to Xin et al., who introduced Notus \cite{notus}.
Notus departs from the prior designs in three significant ways.
First, the core data structure is not a Merkle tree but an RSA accumulator. Second, the data structure not only stores users' balances but also hash chain commitments to their entire \emph{transaction histories}.
Third, the provider also publishes zero-knowledge proofs that each new accumulator is an update of the previous accumulator such that the hash chain is correctly maintained at each updated entry.
Xin et al. term this architecture \emph{dynamic proof of liabilities (DPoL)} since it focuses on epoch-to-epoch updates rather than a static database.
The DPoL design guarantees an append-only property such that a tampering provider's malicious updates would necessarily be detected \emph{whenever} the affected user next performs a check.
Thus, users do not need to perform checks for every epoch.

\subsection{Limitations}\label{subsec:limitations}

As noted in the introduction, prior deployments of proofs of solvency typically used long epoch lengths (e.g., one month). This creates a trivial attack vector: a malicious exchange can borrow assets shortly before an audit to appear solvent and pass all checks. This weakness is well known and often cited in criticism \cite{coppola}.
A direct mitigation is to increase the frequency of proofs. However, existing PoL schemes then run into scalability issues because they require users to take action for every proof.
This may seem counterintuitive, since earlier works argued that only a small fraction of users (e.g., 0.05\%) need to participate for the system to remain secure. But that argument assumes that users’ checking behavior is independent and random \cite[Section 5]{dapolplus}.

Unfortunately, this argument breaks down for two reasons. First, as shown in \cite{statAttacks}, a malicious provider could try to predict which users are likely to check and tailor its tampering strategy accordingly, undermining the probabilistic model. Second, the analysis overlooks the possibility of malicious users colluding with the provider, whose checking probability is effectively zero.

This ``oversight'' stemmed from an unrealistic assumption that is baked into the prevailing PoL model.
Namely, it is generally assumed that the provider only needs to be stopped from under-reporting its liabilities to \emph{honest} users. It is explicitly left out of the threat model that \emph{dishonest} users could still represent \emph{real liabilities that ought to be taken into account when assessing the provider's solvency} \cite[Definition 3.3]{dapolplus} \cite[page 5]{notus}.
We argue that it is highly probable---and in fact, \emph{rational} behavior---for users to break this assumption.
To this end, we describe in what follows an off-the-books (OTB) attack that exploits this gap in the security model and is entirely risk-free for malicious users.

\subheading{OTB attacks.}
Consider the case where users Alice and Bob have non-zero account balances at the same exchange that is attempting to conceal its insolvency.
Shortly before the next PoL audit, Alice secretly colludes with the provider as follows:
The provider would use a value of 0 instead of Alice's true account balance, thereby reducing the stated liabilities to a level it appears able to cover. In return, Alice agrees not to verify her account’s inclusion in the liability audit and is granted withdrawal priority over Bob, along with additional benefits the provider may offer.
Through this arrangement, Alice helps create a false impression of solvency for the honest user Bob. Moreover, this is entirely one-sided: Alice can later back out if she suspects the provider may not be able to pay her after all. She can still perform her verification check at any time and complain to the \emph{Judge} that she was harmed by a malicious provider.
A rational Alice would therefore always accept such off-the-books deals.
\cref{fig:otb} illustrates this example.

\begin{table*}[t]
    \centering
    \caption{
        Comparison of existing PoLs.
        $n$ is the total number of users.
        We obtained the last row's data from \cref{fig:evalThroughput1}.
        \label{tab:related}
    }
    \vspace{-0.25cm}
    \begin{tabular}{l|l|l|l|l}
        \toprule
        & Standard PoLs \cite{maxwell,provisions,dapolplus,falzon23} & Succinct PoLs \cite{binanceImpl}, \cite{xiezhi} & Notus \cite{notus} &  PPoL \\
        \midrule
        OTB attacks non-repudiable & \xmark & \xmark & \xmark & \cmark \\
        No preemptive user checks (cf. \cref{rem:checks})
        & \xmark & \xmark & \halfCircle & \cmark \\
        Auditors' global verification & $O(n \cdot \log(n))$ & $O(1)$ & $O(1)$ & $O(\log^2(n))$ \\
        Per-user inclusion verification & $O(\log(n))$ & $O(1)$ (Xiezhi) & $O(1)$ & $O(\log(n))$ \\
        Updates/s by the server, $n=2^{16}$  & 0.267 (DAPOL+) & 145 (Xiezhi) & 3.69 & \textbf{369} \\
        \bottomrule
    \end{tabular}
\end{table*}

\subheading{Computational overhead.} When schemes do not use summation trees, the prover overhead can be substantial.
For illustration, consider that experiments from the state-of-the-art solution Notus indicate a 3-minute liability proving time on a multi-core machine for around $2^{11}$ updated balances \cite{notus}. By comparison, publicly available data show that, in a 3-minute window, large exchanges may face workloads of around 1 million balance updates (see \cref{fig:workload}).

\subsection{Towards a New Model}\label{sec:newModel}

We now derive requirements for our new PoL model---dubbed Permissioned PoL (PPoL)---and argue that it is best understood as a specialization of a minimal intermediary abstraction (introduced in \cref{sec:prim} as the \ac{prim}).
\cref{tab:related} compares the properties of our resulting PPoL construction (\cref{sec:primConstruct,sec:pol}) against prior systems.

\subheading{Desiderata for \acs{model}.}
We argue that a natural way of addressing OTB attacks is to introduce a notion of \emph{permissioning} in the following sense.
Specifically, we want auditors to be able to verify that every balance update in the database has been properly authorized by the corresponding user.
This verification must not rely on any private user data, allowing it to occur \emph{before} any user becomes active.
If a malicious Alice wants to allow the provider to use 0 instead of her true balance, she must make an explicit withdrawal transaction and her database entry must really become 0 in the published commitment.
This eliminates Alice's repudiation ability which allowed her to go back on an off-the-books deal and still convince the \emph{Judge} that she was a victim.
We thereby remove the incentive for users to \emph{always} agree to helping the provider cheat, addressing a major loophole in the PoL security model.

\begin{rem}[Minimum required checks]\label{rem:checks}
    Note that this design also removes the need for the burdensome preemptive checks of every published commitment on the part of the users, another notable improvement for PoLs.
    Indeed, the burden is now on the provider to present a publicly verifiable proof along with each new commitment---and the proof is verified by \emph{auditors}.
    Crucially, the proof now serves as a tamper prevention mechanism, rather than a detection mechanism. No prevention scheme guards against denial-of-service attacks in which a malicious provider would refuse a requested balance update.
    Thus, \ac{model} users still perform a \emph{one-time check after requesting an update} to ensure it has been correctly applied.
    \emph{This requirement only holds for users who issued an update during the epoch.}
\end{rem}

\begin{figure}[!t]
    \begin{center}
        \includegraphics[width=0.4\textwidth]{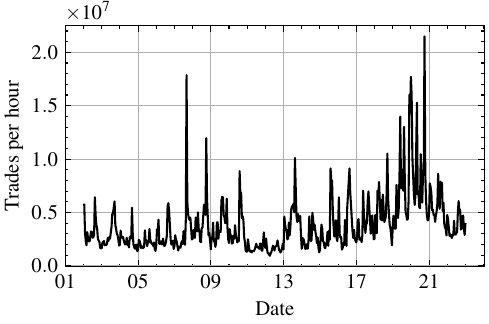}
        \vspace*{-0.5cm}
    \end{center}
    \caption{Executed USDT-denominated trades per hour on Binance between 2 January 2025 and 22 January 2025 \cite{binanceData}.}
    \label{fig:workload}
\end{figure}

\subheading{Determining the right abstraction level.}
To make the \ac{model} requirements described above concrete, we begin by considering the following key questions:
\emph{How can the system assign entry-wise write-access permission to users?
    How may users temporarily transfer the required permission to the provider?
Finally, how can an auditor globally check that all updates conform to the permission policy?}

Intuitively, any system addressing these questions must tightly integrate with an identity infrastructure or a PKI---at the very least, to identify authorized users.
Given the potential complexity of modeling the required (PKI) infrastructure's components and their interactions, we first try to reduce our scope.
As the reader will recall from \cref{sec:intro} and from \cref{sec:basicPoL}, a valid way to conceptualize PoLs is to view them as two interleaved subtasks:
\textbf{(1)} the construction and validation of a commitment to the database of user values and
\textbf{(2)} linking the database commitment to a (possibly also committed) claimed sum of liabilities.
This separation is crucial as we can observe that \emph{the permissioning only affects task \textbf{(1)}}.
If permission checks can be integrated into the construction of the database commitment, a modular \ac{model} is possible by layering an appropriate summation proof for task \textbf{(2)}.
This motivates an intermediate level of abstraction. In the sequel, we show how our \acl{prim} efficiently realizes exactly the required permission logic (see the last column of \cref{tab:related}).

\section{Permissioned Vector Commitment}\label{sec:prim}

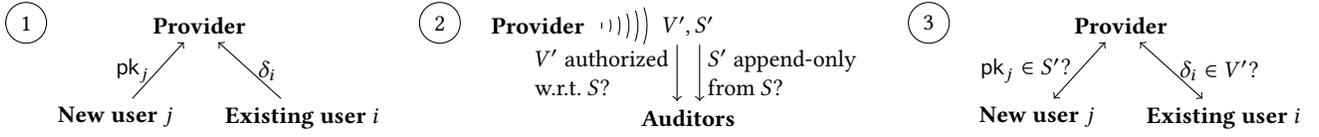
\begin{figure*}[!t]
    \centering
    \begin{tikzpicture}[
            node distance=0.75cm and -0.5cm,
        ]
        \node[circle,draw,anchor=north west] (num1) at (0,0) {1};
        \node[anchor=north] (mid1) at (2.5,0) {\textbf{Provider}};
        \node[below left=of mid1] (bot1a) {\textbf{New user $j$}};
        \node[below right=of mid1] (bot1b) {\textbf{Existing user $i$}};
        \draw[->] (bot1a) -- (mid1) node[left,midway] {$\pk_j$};
        \draw[->] (bot1b) -- (mid1) node[right,midway] {$\delta_i$};

        \node[circle,draw,anchor=north west] (num2) at (5.5,0) {2};
        \node[anchor=north] (mid2a) at (7,0) {\textbf{Provider}};
        \node[anchor=north] (mid2b) at (9,0) {$\dbCom',\keyCom'$};
        \node[below=of mid2b] (bot2) {\textbf{Auditors}};
        \draw[decoration={expanding waves, segment length=1.2mm, angle=20},decorate] (mid2a) -- (mid2b);
        \draw[<-] ([xshift=-0.15cm] bot2.north) -- ([xshift=-0.15cm] mid2b.south) node[left,midway,align=left] {$\dbCom'$ authorized\\w.r.t. $\keyCom$?};
        \draw[<-] ([xshift=0.15cm] bot2.north) -- ([xshift=0.15cm] mid2b.south) node[right,midway,align=left] {$\keyCom'$ append-only\\from $\keyCom$?};

        \node[circle,draw,anchor=north west] (num3) at (12,0) {3};
        \node[anchor=north] (mid3) at (14.75,0) {\textbf{Provider}};
        \node[below left=of mid3] (bot3a) {\textbf{New user $j$}};
        \node[below right=of mid3] (bot3b) {\textbf{Existing user $i$}};
        \draw[<->] (bot3a) -- (mid3) node[left,midway] {$\pk_j \in \keyCom'$?};
        \draw[<->] (bot3b) -- (mid3) node[right,midway] {$\delta_i \in \dbCom'$?};
    \end{tikzpicture}
    \vspace{-0.5 em}
    \caption{High-level \ac{prim} workflow.
        (1) Users make requests within an epoch.
        (2) At the end of the epoch, the provider publishes new commitments with publicly verifiable proofs.
        (3) The users who made requests confirm their updates' inclusion.
    \label{fig:setup}}
\end{figure*}

\subsection{Overview}\label{sec:primOverview}

A \ac{prim} is a multi-party system centered around two commitments $\dbCom$ and $\keyCom$.
$\keyCom$ is a commitment to the public keys of mutually distrustful \emph{users} and $\dbCom$ is a commitment to a database managed by an untrusted \emph{provider} where each user owns one entry.
At the end of each \emph{epoch}, the provider publishes new versions of $\dbCom$ and $\keyCom$. Each new version is accompanied by a global consistency proof, which is verified by trusted \emph{auditors}.
The provider has three primary tasks: registering a new user, updating a user's value in the database and publishing new versions of $\dbCom$ and $\keyCom$. Registered users perform the following operations: checking that their key is indeed included in $\keyCom$, signing an update to their entry in $\dbCom$ and checking that their entry in $\dbCom$ is correct.
After each publication, newly registered users verify the inclusion of their key in $\keyCom$ and users who requested an update to their values confirm that the new commitment includes their update.
Auditors use the global consistency proof to check that the provider
\textbf{(1)} only added new keys to $\keyCom$ but did not remove or change any existing key and
\textbf{(2)} did not tamper with $\dbCom$. That is, for each updated value in $\dbCom$, the corresponding user, as identified by their key committed in $\keyCom$, must have provided a valid signature over the update. The global consistency proof strengthens a \ac{prim}-based \ac{model} against OTB attacks by ensuring that \emph{every} updated entry is explicitly signed, making the attacker accountable.
Inactive users are \emph{not} required to check every new commitment, as any tampering with their value would be detected globally.\footnote{Inactive users are those who are registered in the system but who do not sign, or \enquote{refuse} to sign, a balance update during a given epoch. The system's progress does not depend on inactive users.}

\Cref{fig:setup} illustrates the workflows.

\subsection{Definitions}\label{sec:definitions}

We now give a formal definition of a \ac{prim} as a collection of algorithms for the various parties.
All algorithms can be stateful, i.e., each party's algorithms can read and write auxiliary data ($\aux$) that the party maintains in memory.
In the sequel, we may omit this state from our notation for ease of presentation.
\begin{defn}[\acl{prim}]\label{def:prim}
    A \acl{prim} (\acs{prim}) scheme consists of the following algorithms.
    For the provider:
    \begin{itemize}
        \item $\Setup(\secpar, n) \rightarrow \pp$.
            Given a security parameter $\secpar$ and an integer $n$, returns public parameters $\pp$ for up to $n$ users.
            $\pp$ is an implicit input to all algorithms.
        \item $\AddPK(i, \pk_i; \aux) \rightarrow \aux'$. Adds a new public key $\pk_i$ at index $i$ (and updates the internal state $\aux$ to $\aux'$).
        \item $\VerifySig(i, E, \pk_i, \delta_i, \sig_i; \aux) \rightarrow 0/1; \aux'$.
            Verifies the $i$th user's signature $\sig_i$ over an update $\delta_i$ for epoch $E$ (and updates the internal state $\aux$ to $\aux'$).
        \item $\UpdateDB(i, \delta_i, \sig_i; \aux) \rightarrow \aux'$. Applies an update $\delta_i$ to the data value at index $i$ (and updates $\aux$ to $\aux'$).
        \item $\Publish(\aux) \rightarrow \dbCom', \keyCom', \dbProof, \keyProof, \left(\dbProof_i\right)_{i\in I}, \left(\keyProof_i\right)_{i\in J}; \aux'$.
            Returns proofs $\dbProof$ and $\keyProof$ authenticating, respectively, the \emph{new} commitments $\dbCom'$ and $\keyCom'$. With $I,J \subseteq [0,n-1]$ being sets of indices, also returns inclusion proofs for all updated database values $i \in I$ and for all appended keys $i \in J$ (and updates the internal state $\aux$ to $\aux'$).
            \textbf{Note:} \EndEpoch{} returns \emph{all} required proofs.
    \end{itemize}

    \noindent{}For users:
    \begin{itemize}
        \item $\KeyGen(i) \rightarrow \sk_i, \pk_i$.
            Returns secret key $\sk_i$ and public key $\pk_i$.
        \item $\VerifyPK(\keyCom, i, \pk_i, \keyProof_i; \aux) \rightarrow 0/1; \aux'$.
            Verifies an inclusion proof $\keyProof_i$ for the $i$th key $\pk_i$ committed in $\keyCom$ (and updates the internal state $\aux$ to $\aux'$).
        \item $\Sign(\sk_i, E, \delta_i; \aux) \rightarrow \sig_i; \aux'$.
            Returns a signature $\sig_i$ for an update $\delta_i$ to the user's value in epoch $E$ (and updates the internal state to $\aux'$).
        \item $\VerifyLookup(\dbCom, i, v_i, \dbProof_i; \aux') \rightarrow 0/1; \aux'$.
            Verifies a lookup proof $\dbProof_i$ for the $i$th entry $v_i$ committed in $\dbCom$ (and updates the internal state $\aux$ to $\aux'$).
    \end{itemize}

    \noindent For auditors:
    \begin{itemize}
        \item $\VerifyDB(\keyCom, E, \dbCom', \dbProof; \aux) \rightarrow 0/1; \aux'$.
            Verifies the commitment $\dbCom'$ against $\keyCom$ using the proof $\dbProof$ (and updates the internal state $\aux$ to $\aux'$).
        \item $\VerifyKeys(\keyCom, E, \keyCom', \keyProof; \aux) \rightarrow 0/1; \aux'$.
            Verifies the commitment $\keyCom'$ against $\keyCom$ using the proof $\keyProof$ (and updates the internal state $\aux$ to $\aux'$).
    \end{itemize}
\end{defn}

Note that a signed update $\delta$ is bound to an epoch number $E$ to avoid replay attacks.
Further note that the provider's algorithms allow dividing up work: some tasks---\UpdateDB{} and \AddPK{}---are performed immediately upon receiving a client request, while others---finalizing and publishing $\dbCom$ and $\keyCom$---are deferred to the end of an epoch during \CloseEpoch{}.
This design may aid efficient implementations.
Finally, note that \EndEpoch{} can also generate initial \enquote{empty} commitments on startup.

We now turn to the desired security and privacy properties. The primary goal of a \ac{prim} scheme is to prevent a malicious provider from outputting a commitment to a database in which any user's entry is different from what the user has authorized.
If the consistency proofs $\dbProof$ and $\keyProof$ pass auditor verification, then every individual balance change must have been authorized by a user request.
Thus, in a \ac{prim}-based \ac{model}, an OTB attack leaves cryptographic traces.

To formalize this, we consider a security
experiment $\gameDbSec{}$ where the adversary controls the database entries of all but one honest user and has access to a signing oracle for the honest user.
The adversary must output a consistent history, of \emph{any} length, of commitments and proofs (that auditors would accept) that eventually leads to a corrupted database. While only the final corrupted commitment needs to be output, note that the adversary may attempt to craft intermediate commitments that modify the internal state of either the honest user or the auditor. Valid inclusion proofs must also be provided for the honest user following each signing query.
$\gameDbSec$ proceeds as follows:
\begin{enumerate}
    \item Public parameters \pp{} and a key pair $\sk_i, \pk_i$ are generated by $\Setup{}(\secpar, n)$ and, respectively, $\KeyGen{}(i)$.
        An empty list $\expList$ is initialized.
    \item \pp{} and $\pk_i$ are sent to the adversary.
    \item The adversary gets access to a \Sign{} oracle for $\sk_i$.
        On input $(j, \delta^{(j)})$, the oracle outputs $\sig^{(j)}$ as computed by $\Sign(\sk_i, j, \delta^{(j)})$.
        The query is then added to the list $\expList$.
        However, if previously there already was another query of the form $(j, \cdot)$, then the same output as before is returned and $\expList$ stays unchanged.
    \item Eventually, the adversary outputs integers $K, E$ with $E < K$, commitments $\keyCom^{(0)}, \dbCom^{(0)}$, commitments and auditor proofs $\dbCom^{(j)}, \keyCom^{(j)}, \dbProof^{(j)}, \keyProof^{(j)}$ for all $j \in [K]$, a key inclusion proof $\keyProof_i^{(E)}$, data inclusion proofs $\dbProof_i^{(j)}$ for all $j \in (\expList \cap [E+1,K-1]) \cup \{K\}$ and a value $z$.
    \item The adversary wins if:
        \begin{itemize}
            \item $\forall j \in [K]$, $\VerifyKeys(\keyCom^{(j-1)},\allowbreak{} j,\allowbreak{} \keyCom^{(j)},\allowbreak{} \keyProof^{(j)};\allowbreak{} \aux_1^{(j-1)})\allowbreak{} = 1$ where $\aux_1^{(j-1)}$ gets updated to $\aux_1^{(j)}$ and $\aux_1^{(0)}$ is the auditor's (honest) initial state.
            \item $\forall j \in [K]$, $\VerifyDB(\keyCom^{(j-1)},\allowbreak{} j,\allowbreak{} \dbCom^{(j)},\allowbreak{} \dbProof^{(j)};\allowbreak{} \aux_2^{(j-1)})\allowbreak{} = 1$ where $\aux_2^{(j-1)}$ gets updated to $\aux_2^{(j)}$ and $\aux_2^{(0)}$ is the auditor's (honest) initial state.
            \item $\VerifyPK(\keyCom^{(E)}, i, \pk_i, \keyProof_i) = 1$.
            \item $\forall j \in \expList \cap [E+1, K-1]$, $\VerifyLU(\dbCom^{(j)},\allowbreak{} i,\allowbreak{} \sum_{k \in \expList \cap [E+1, j]} \delta^{(k)},\allowbreak{} \dbProof_i^{(j)})\allowbreak{} = 1$.
            \item $\VerifyLU(\dbCom^{(K)}, i, z, \dbProof_i^{(K)}) = 1$.
            \item $z \neq \sum_{k \in \expList \cap [E+1, K]} \delta^{(k)}$.
        \end{itemize}
\end{enumerate}

\begin{defn}[Data Security]\label{def:dbSec}
    Let $\secpar$ be a security parameter.
    A \ac{prim} scheme has data security if the probability that any probabilistic polynomial-time adversary wins $\gameDbSec{}$ is bounded by a negligible function in $\secpar$.
\end{defn}

For privacy, it is required that neither the commitments nor the per-user inclusion proofs nor the global auditor proofs leak the content of $\dbCom$. More precisely, we define privacy as indistinguishability of a real protocol flow from a simulated one by an adversary $\advA$ that may control the database state, arbitrarily many corrupted users and auditors. In the real flow (formally $\gameZkReal{}$), the provider behaves normally. In the simulated flow (formally $\gameZkSim$), the provider's messages come from a simulator algorithm that is \emph{not} given honest user's private data.
The real protocol flow, $\gameZkReal$, proceeds as follows:
\begin{enumerate}
    \item Compute $\pp$ using \Setup{}.
        Send $\pp$ to $\advA$.
        Set $E=0$ and $I,J = \emptyset$.
    \item Arbitrarily many times, $\advA$ chooses between:
        \begin{itemize}
            \item \textbf{Adversarial user:}
                Send $i$. Receive $\pk_i$. Call $\AddPK(i, \pk_i)$.
                Set $I = I \cup \{i\}$.
            \item \textbf{Honest user:}
                Compute $(\sk_i, \pk_i) = \KeyGen(i)$.
                Call $\AddPK(i, \pk_i)$.
                Set $J = J \cup \{i\}$.
                Send $i, \pk_i$.
            \item \textbf{Adversarial update:}
                Receive $i, \delta, \sig$. If $i \in I$ and $\VerifySig(i, E, \pk_i, \delta, \sig)$ accepts, call $\UpdateDB(i, \delta, \sig)$.
            \item \textbf{Honest update:}
                Receive $i, \delta$. If $i \in J$, compute $\sig = \Sign(\sk_i, E, \delta)$ and call $\UpdateDB(i, \delta, \sig)$.
            \item \textbf{End epoch:}
                Call \CloseEpoch{} and send the output $\left(\dbCom,\allowbreak{} \keyCom,\allowbreak{} \dbProof,\allowbreak{} \keyProof,\allowbreak{} \left(\dbProof_i\right)_{i \in [0,n-1]},\right.$ $\left.\left(\keyProof_i\right)_{i \in [0,n-1]}\right)$. Set $E = E+1$.
        \end{itemize}
    \item $\advA$ outputs a bit.
\end{enumerate}
The simulated flow, relative to the simulator $\Sim = (\Sim_1, \Sim_2, \Sim_3, \Sim_4, \Sim_5, \Sim_6)$, is defined as the experiment $\gameZkSim$ (during which $\Sim$ may read and write auxiliary state):
\begin{enumerate}
    \item Compute $\pp$ using $\Sim_1$.
        Send $\pp$. Set $E=0$ and $I,J = \emptyset$.
    \item Arbitrarily many times, $\advA$ chooses between:
        \begin{itemize}
            \item \textbf{Adversarial user:}
                Send $i$. Receive $\pk_i$.
                Call $\Sim_2(i, \pk_i)$.
                Set $I = I \cup \{i\}$.
            \item \textbf{Honest user:}
                Call $\Sim_3(i)$, which outputs $\pk_i$.
                Set $J = J \cup \{i\}$.
                Send $i, \pk_i$.
            \item \textbf{Adversarial update:}
                Receive $i, \delta, \sig$. If $i \in I$ and $\VerifySig(i, E, \pk_i, \delta, \sig)$ accepts, call $\Sim_4(i, \delta, \sig)$.
            \item \textbf{Honest update:}
                Receive $i, \delta$. If $i \in J$, call $\Sim_5(i)$.
            \item \textbf{End epoch:}
                Call $\Sim_6$, which outputs $\left(\dbCom,\allowbreak{} \keyCom,\allowbreak{} \dbProof,\allowbreak{} \keyProof,\allowbreak{}\right.$ $\left(\dbProof_i\right)_{i \in [0,n-1]},\allowbreak{}$ $\left.\left(\keyProof_i\right)_{i \in [0,n-1]}\right)$, and send the output. Set $E = E+1$.
        \end{itemize}
    \item $\advA$ outputs a bit.
\end{enumerate}

\begin{defn}[Data Privacy]\label{def:dbPriv}
    Let $\secpar$ be a security parameter.
    A \ac{prim} scheme has data privacy if there is a simulator such that for any probabilistic polynomial-time adversary, $|\Pr[\gameZkReal = 1] - \Pr[\gameZkSim = 1]|$ is bounded by a negligible function in $\secpar$.
\end{defn}

Note that a PVC does not explicitly try to hide metadata, like the number of users $n$.
Indeed, hiding $n$ is unnecessary in the \ac{model} context as the customer count of crypto exchanges is made public in \emph{legally mandated reports} (e.g., \cite{coinbaseReport}).

\section{Preliminaries \& New Building Blocks}\label{sec:prelims}

\subsection{Conventions}
$\FF$ is a prime field.
$\GG, \hat{\GG}, \GG_T$ are elliptic curve groups of order $|\FF|$ with generators $g \in \GG, \hat{g} \in \hat{\GG}$ and with a bilinear map $e: \GG \times \hat{\GG} \rightarrow \GG_T$.
The group operation is written as $+$.
We use brackets for clarity, e.g., denoting by $\delta \cdot [\tau \cdot g]$ the scalar multiplication of $\delta \in \FF$ with the $\GG$-element whose logarithm is $\tau$.
A multiplicative subgroup of size $n$, $\subgroup = \{1, \omega, \dots, \omega^{n-1}\} \subset \FF$, is the interpolation domain for Lagrange polynomials.
The associated basis is $\lag_0(x), \dots, \lag_{n-1}(x)$.
Polynomials are denoted with lower-case letters, while commitments use upper-case letters.
We define $\perm$ as the natural FFT ordering on $\subgroup$, i.e., the permutation on $[0,n-1]$ which maps $i \mapsto \perm(i) = \sum_{j \in [0,\log(n)-1]} i_j \cdot 2^{\log(n)-j-1}$ where $i_0, \dots, i_{\log(n)-1} \in \{0,1\}$ are the bits of $i$. \emph{Lookup}, \emph{inclusion} and \emph{opening} proof are used as synonyms.

\subsection{KZG Vector Commitments}\label{sec:kzg}

A vector commitment \cite{cf13} is a short digest $A$ of an list of values $a_0, \dots, a_{n-1}$ with efficient proving and verification algorithms for statements of the form \enquote{the $i$th element in the list is $a_i$} (given any $i$).
A polynomial commitment \cite{kzg} is a short digest $\dbCom$ of a polynomial $\dbPoly$ with efficient proving and verification algorithms for statements of the form \enquote{$\dbCom$ commits to a polynomial $\dbPoly$ such that $\dbPoly(y) = z$} (given $y$ and $z$).
Throughout, we identify a vector ${a} = (a_0,\dots,a_{n-1})$ with the polynomial $\dbPoly$ with $\dbPoly(\omega^i) = a_i$ and we use a commitment to $\dbPoly$ as a commitment to ${a}$ \cite[cf.][]{asvc}.

We use the following (\enquote{AMT}) variant \cite{tomescu20} of the KZG scheme \cite{kzg}, comprising five algorithms.
\begin{itemize}
    \item $\KZGSetup(\secpar, n)$, given a security parameter $\secpar$ and an integer $n$, outputs global public parameters in the form of a structured reference string (SRS) consisting of $\tau \cdot g, \dots, \tau^{n-1} \cdot g, \tau \cdot \hat{g}, \dots, \tau^{n} \cdot \hat{g}$ (without revealing $\tau$).
    \item $\KZGCommit(\dbPoly)$ takes the values of a polynomial $\dbPoly$ over $\subgroup$ and outputs the commitment $\dbCom \in \GG$ as $\dbPoly(\tau) \cdot g = \sum_i \dbPoly(\omega^i) \cdot [\lag_i(\tau) \cdot g]$ or, alternatively, computes the commitment $\hat{\dbCom} \in \hat{\GG}$ as $\dbPoly(\tau) \cdot \hat{g}$ in analogous fashion.
    \item $\AMTOpen(\dbCom, i)$, given any $i \in [0,n-1]$, outputs the (precomputed) proof that $\dbPoly(\omega^i)=z$.
        The proof is $\dbProof_i = (\dbProof_{i,0}, \dots, \dbProof_{i,\log(n)-1}) \in \GG^{\log(n)}$.
    \item $\AMTVerify(\dbCom,i,Z,\dbProof)$ checks, with $Z = z \cdot g$:
        \[e(\dbCom - Z, \hat{g}) = \sum_{j \in [0,\log(n)-1]} e\left(\dbProof_{i,j}, [\tau^{2^j} \cdot \hat{g}] - \omega^{i \cdot 2^j} \cdot \hat{g}\right).\]
        For a $\hat{\GG}$-commitment, the left-hand side is $e(g, \hat{\dbCom} - z \cdot \hat{g})$.
\end{itemize}
The opening proofs form a binary tree since, e.g., half of all $\omega^i$ share the same last proof component $\dbProof_{i,\log(n)-1}$.
If a single value $\dbPoly(\omega^i)$ is modified, i.e., if $\dbPoly$ is updated to $\dbPoly' = \dbPoly + \delta \cdot \lag_i$, then the commitment can be updated as $\dbCom' = \dbCom + \delta \cdot [\lag_i(\tau) \cdot g]$.
Let $(\lagProof_{i,0}, \dots, \lagProof_{i,\log(n)-1})$ be the proof that $\lag_i(\omega^i)=1$.
The final algorithm which updates all opening proofs (and also creates the tree in the first place) is:
\begin{itemize}
    \item $\AMTMaintain(\dbCom, i, \delta)$ updates the binary tree of proof components in $\log(n)$ places as $\dbProof_{i,j}' = \dbProof_{i,j} + \delta \cdot \lagProof_{i,j}$.
\end{itemize}

\subsection{Quotient Proofs}\label{sec:quotients}

We use the following well-known algebraic identities to prove statements about interpolated polynomials.

\subheading{Zerocheck \cite[cf.][]{aurora}:}
If $\dbPoly, \keyPoly, \sigPoly$ are polynomials of degree $\leq n-1$, then $\dbPoly(\omega^i) \cdot \keyPoly(\omega^i) = \sigPoly(\omega^i)$ for all $i \in [0,n-1]$ if and only if there exists $\zeroQuo, \deg(\zeroQuo) \leq n-2$ such that $\dbPoly \cdot \keyPoly - \sigPoly = \zeroQuo \cdot (x^n-1)$.

\subheading{Binarity:}
If $\binPoly$ is a polynomial of degree $\leq n-1$, then $\binPoly(\omega^i) \in \{0,1\}$ for all $i \in [0,n-1]$ if and only if there exists $\binQuo,\deg(\binQuo)\leq n-2$ such that $\binPoly - \binPoly^2 = \binQuo \cdot (x^n-1)$.

\subheading{Sumcheck \cite[cf.][]{aurora}:}
If two polynomials $\keyPoly, \binPoly$ are of degree $\leq n-1$, given $\total \in \FF$, we have $\total = \sum_i \keyPoly(\omega^i) \cdot \binPoly(\omega^i)$ if and only if there exist polynomials $\apkRem,\apkQuo$ with $\deg(\apkRem),\deg(\apkQuo) \leq n-2$ such that $\keyPoly \cdot \binPoly = \total/n + \apkRem \cdot x + \apkQuo \cdot (x^n-1)$.

\subsection{APK Proofs}\label{sec:secrets}\label{sec:apkproof}

\subheading{Computing $\keyCom$.}
As per \cite{garg24,das23}, to commit to $n$ secret values---e.g., signing keys in the BLS signature scheme \cite{bls04}---with KZG, for each $i \in [0,n-1]$, the holder of the secret $\sk_i$ makes $\sk_i \cdot \lag_i(\tau) \cdot g$ public.
Then, the sum of these values is the KZG commitment to a polynomial $\keyPoly$ which interpolates the $n$ secrets: $\keyCom = \sum_{i \in [0,n-1]} \sk_i \cdot \lag_i(\tau) \cdot g.$

\subheading{APK proof.}
An \emph{APK proof} \cite{garg24,das23,cssv} proves statements of the form \enquote{$\exists I \subseteq [0,n-1]$ such that $\apk = \sum_{i \in I} \sk_i \cdot g$}, given a group element $\apk$.
The terminology stems from BLS signatures where the quantity \apk{} is the \emph{aggregated public key} of the given subset of signers.
Known KZG-based APK proofs \cite{garg24,das23}, roughly, consist of commitments to a binary-valued polynomial $\binPoly$, to a binarity proof quotient $\binQuo$ as in \cref{sec:quotients} and commitments to sumcheck quotients $\apkRem$ and $\apkQuo$ such that $\keyPoly \cdot \binPoly = \left(\sum_{i \in I} \sk_i\right)/n + \apkRem \cdot x + \apkQuo \cdot (x^n-1).$
This identity is verified using the pairing, which is possible since the sum in the above equation must be exactly the discrete logarithm of $\apk$.
For a prover to be able to construct this proof, each secret holder makes additional helpers values public.
Namely, each signer $i$ makes public
\[\zeroQuoCom_{i,i} = \sk_i \cdot \frac{\lag_i(\tau)^2-\lag_i(\tau)}{\tau^n-1} \cdot g,\ \apkRemCom_i = \sk_i \cdot \frac{\lag_i(\tau)-1/n}{\tau} \cdot g\]
and, for all $j \neq i$, \[\zeroQuoCom_{i,j} = \sk_i \cdot \frac{\lag_i(\tau) \cdot \lag_j(\tau)}{\tau^n-1} \cdot g.\]
With $\zeroQuoCom_i = \sum_{j \in [0,n-1]} \zeroQuoCom_{j,i}$, commitments to $\apkQuo$ and $\apkRem$ can be computed as $\sum_{i \in I} \zeroQuoCom_{i}$ and, respectively, $\sum_{i \in I} \apkRemCom_i.$
An explicit degree check must be performed on $\apkRem$, which can use the polynomial $\apkRemDeg(x) = \apkRem(x) \cdot x$ with commitment $\apkRemDegCom = \sum_{i \in I} \left([\sk_i \cdot \lag_i(\tau) \cdot g] - 1/n \cdot [\sk_i \cdot g]\right)$ \cite{garg24}.
The verification, which we denote here as $\APKVerify(\keyCom, \apk, (\binCom, \hat{\binCom}, \binQuoCom, \apkQuoCom, \apkRemCom, \apkRemDegCom))$, checks
\textbf{(1)} $e(\binCom, \hat{g}) = e(g, \hat{\binCom})$,
\textbf{(2)} $e(\binCom, \hat{g} - \hat{\binCom}) = e(\binQuoCom, [\tau^n \cdot \hat{g}] - \hat{g})$,
\textbf{(3)} $e(\apkRemCom, [\tau \cdot \hat{g}]) = e(\apkRemDegCom, \hat{g})$ and
\textbf{(4)} $e(\keyCom, \hat{\binCom}) = e(\apk_E/n, \hat{g}) + e(\apkRemCom, [\tau \cdot \hat{g}]) + e(\apkQuoCom, [\tau^n \cdot \hat{g}] - \hat{g})$.

\subheading{Helper values' well-formedness.}
The helper values are verified against the users' public keys using a pairing check, e.g., $e(\apkRemCom_i, \hat{g}) = e\left(\pk_i, \left[\frac{\lag_i(\tau)-1/n}{\tau} \cdot \hat{g}\right]\right)$ for $\apkRemCom_i$.
Formally, we will abbreviate all such checks on helper values with a single \VerifyHelpers{} routine.

\subsection{Lookup Tree}\label{sec:secretTree}

We are interested in efficient updatable lookup proofs for $\keyCom$, enabling statements like “The $i$th key in $\keyCom$ is $\sk_i$” for all $i \in [0, n-1]$, with only $O(\log n)$ update cost when a single entry changes.
These are precisely the properties of the proof tree from \cref{sec:kzg}, prompting the question: can it be adapted to the setting of committed secrets?
We observe that such a construction is, in fact, possible given only $\log(n)$ additional helper values from each signer.
Specifically, each secret holder $i$ may make public, for $j \in [0,\log(n)-1]$, $\sk_i \cdot L_{i,j}$ where $L_{i,j}$ are the proof components for $\lag_i(\omega^i)=1$ from \cref{sec:kzg}.
Then, by linear interpolation, $\keyProof_{i,j} = \sum_{i \in [0,n-1]} [\sk_i \cdot L_{i,j}]$ must be the components of a sound opening proof for $\keyPoly(\omega^i) = \sk_i$ in the sense of \AMTVerify{}. In our context, the only verifier of $\sk_i$ is the holder of $\sk_i$.
Therefore, the \AMTVerify{} algorithm can be used exactly as is.
As before, all $n$ opening proofs can be organized as a binary tree. An update to the secret commitment of the form $\keyCom' = \keyCom + [\sk_i \cdot \lag_i(\tau) \cdot g]$ requires updating the tree in only $\log(n)$ places as $\keyProof_{i,j}' = \keyProof_{i,j} + [\sk_i \cdot L_{i,j}]$.
This way, the proof tree's properties indeed carry over to the setting of secret polynomials, as desired.

We overload \AMTMaintain{} for commitments to secrets as described here.

\subsection{Append-Only Updates}\label{sec:append}

\newlength\levelFourDist
\setlength\levelFourDist{0.4cm}
\newlength\levelThreeDist
\setlength\levelThreeDist{2\levelFourDist}
\newlength\levelTwoDist
\setlength\levelTwoDist{4\levelFourDist}
\newlength\levelOneDist
\setlength\levelOneDist{8\levelFourDist}
\newlength\tableColMinWitdth
\setlength\tableColMinWitdth{1.95\levelFourDist}
\newlength\bracesOffset
\setlength\bracesOffset{0.427cm}

\begin{figure}[!t]
    \centering
    \begin{tikzpicture}[
            level distance=1cm,
            level 1/.style={level distance=0.5cm,sibling distance=\levelOneDist},
            level 2/.style={level distance=1cm, sibling distance=\levelTwoDist},
            level 3/.style={level distance=1cm, sibling distance=\levelThreeDist},
            level 4/.style={level distance=1cm, sibling distance=\levelFourDist},
            decoration={brace,amplitude=0.15cm}
        ]
        \node (root) {$\keyPoly$}
        child {node {$\keyPoly_{k,2}$}
            child {node {$\keyPoly_{k,1}$}
                child {node (sZeroZero) {$\keyPoly_{0,0}$}}
                child {node[rectangle,draw] {$\keyPoly_{k,0}$}}
            }
            child {node {$\keyPoly_{k,1}^*$ ($=0$)}
                child {node {.}}
                child {node {.}}
            }
        }
        child {node {$\keyPoly_{k,2}^*$ ($=0$)}
            child {node {.}
                child {node {.}}
                child {node {.}}
            }
            child {node {.}
                child {node {.}}
                child {node {.}}
            }
        };
        \matrix[matrix of nodes,
            anchor=north west,
            xshift=-2.1\tableColMinWitdth, nodes={minimum width=\tableColMinWitdth, anchor=center},
            nodes in empty cells
        ] (table) at (sZeroZero.south){$i$: & 0 & $\frac{n}{2}$ & $\cdots$ &  &  &  & $\frac{n}{2}-1$ & $n-1$ \\
            $\perm(i)$: & 0 & $ 1 $ & $2$ & $3$ & $4$ & $5$ & $6$ & $7$ \\
            $\keyPoly(\omega^i)$: & $\neq0$ & $=0$ & $=0$ & $=0$ & $=0$ & $=0$ & $=0$ & $=0$ \\
            \textbf{Proof:} & &  &  &  &  &  &  & \\
        };
        \draw[decorate,transform canvas={yshift=\bracesOffset}] (table-4-3.south east) -- node[below=2pt] {$\keyProof_{k,0}$} (table-4-3.south west);
        \draw[decorate,transform canvas={yshift=\bracesOffset}] (table-4-5.south east) -- node[below=2pt] {$\keyProof_{k,1}$} (table-4-4.south west);
        \draw[decorate,transform canvas={yshift=\bracesOffset}] (table-4-9.south east) -- node[below=2pt] {$\keyProof_{k,2}$} (table-4-6.south west);
    \end{tikzpicture}
    \vspace{-2em}
    \caption{Example of our method to prove that $\keyPoly(\omega^i)=0$ for all $i$ with $\perm(i) \in [\perm(k),n-1]$ for $n=8,k=n/2$.
    \label{fig:tree}}
\end{figure}

Given two committed secret polynomials $\keyPoly, \keyPoly_E$, we are interested in proving that $\keyPoly+\keyPoly_E$ is an append-only update in the sense that $\keyPoly_E$ only adds new evaluations on $\subgroup$ to $\keyPoly$ \enquote{at the end}.
Formally, we will prove that that there exists a \emph{split point} $k \in [0,n-1]$ such that, firstly, $\keyPoly(\omega^i) = 0$ for all $i \in [0,n-1]$ with $\perm(i) \in [\perm(k), n-1]$ and, secondly, $\keyPoly_E(\omega^i) = 0$ for all $i$ with $\perm(i) \in [0,\perm(k)-1]$.
Here, $\perm$ is the bit reversal map, the standard FFT ordering for elements in $\subgroup$ \cite{clrs}.
See also \cref{fig:tree} for an example diagram.

The key idea is to leverage the proof tree from \cref{sec:secretTree}, once for $\keyPoly$ and once for $\keyPoly_E$.
In particular, we can take the proof components for $\keyPoly(\omega^k)=0$---the proof tree path consisting of $\keyProof_{k,0},\allowbreak{} \dots,\allowbreak{} \keyProof_{k,\log(n)-1}$---and exploit that each of these components has a \emph{dual nature}.
Let us define $\keyPoly_{k,\log(n)} = \keyPoly$ and $\keyPoly_{k,0} = \keyPoly(\omega^k)$. Then, on the one hand, for all $j \in [0,\log(n)-1]$, $\keyProof_{k,j}$ is a commitment to a quotient polynomial that proves that some degree-$(2^{j+1}-1)$ polynomial $\keyPoly_{k,j+1}$ agrees with some degree-$(2^{j}-1)$ polynomial $\keyPoly_{k,j}$ on the vanishing set of $x^{2^j}-\omega^{k \cdot 2^j}$.
But on the other hand, by the construction of the proof tree, $\keyProof_{k,j}$ is \emph{also} a proof that $\keyPoly_{k,j+1}$ agrees with some polynomial $\keyPoly_{k,j}^*$ on the vanishing set of $x^{2^j}+\omega^{k\cdot 2^j}$.
Since all $\keyPoly_{i,j}$, for all $i$, make up a binary tree, we may call $\keyPoly_{k,j}^*$ the sibling of $\keyPoly_{k,j}$ (\cref{fig:tree} shows the tree structure).
Our desired appendage proof is equivalent to proving the following three statements, given $k$.
\textbf{(1)} All the $\keyPoly_{k,j}^*$ which are \emph{right siblings} are 0. \textbf{(2)} In the analogous remainder tree for $\keyPoly_E$, all \emph{left siblings} of polynomials on the path from $\keyPoly_E(\omega^{\perm(\perm(k)-1))})$ to $\keyPoly_E$ are 0.
\textbf{(3)} $\keyPoly(\omega^k)= \keyPoly_E(\omega^{\perm(\perm(k)-1))})=0$.
It follows from the above discussion that the opening proofs $(\keyProof_{k,0}, \dots, \keyProof_{k,\log(n)-1})$ for $\keyPoly(\omega^k)$ and, say, $(\newKeyProof_0, \dots, \newKeyProof_{\log(n)-1})$ for $\keyPoly_E(\omega^{\perm(\perm(k)-1))})$ can be used to confirm exactly these three claims due to the proofs' primary and dual nature.
A diagram of our method, for the case of check \textbf{(1)}, is shown in \cref{fig:tree}.
The full verification routine, \VerifyAppend{}, is shown in \cref{fig:append}.
Here, we use the fact that the bit decompositions of $k$ and $\perm(\perm(k)-1)$ determine whether polynomials are left or right children.
Our approach is simple and prover-efficient since, given the proof trees for $\keyPoly$ and $\keyPoly_E$, producing an appendage proof takes no extra work.\footnote{Other append-only proofs are possible, though likely less efficient. For example, one might use a quotient that proves the vanishing of $\keyPoly \cdot \keyPoly_E$ over $\domain$. }
When we write out the full verification equations (\cref{fig:append}), we see that $O(j)$ pairings are computed per tree level $j$, giving $O(\log^2(n))$ in total.

\begin{figure}[!t]
\centering
\begin{pcvstack}[boxed]
\procedure[bodylinesep=\belowLine]{$\VerifyAppend\left(\keyCom,\keyCom_E,k,\left(\keyProof_{k,j},\newKeyProof_j\right)_{j\in[0,\log(n)-1]}\right)$}{
c = \perm(\perm(k)-1) \\
\pcfor \ell \in [0,\log(n)-1]: \\
\quad \pcif k_{\log(n)-1-\ell} = 0: \pcComment{$\keyPoly_{k,\ell}$ is a left child} \\
\qquad \pccheck e(\keyCom,\hat{g}) = e\left(\keyProof_{k,\ell}, [\tau^{2^\ell} \cdot \hat{g}] + \omega^{k\cdot 2^{\ell}} \cdot \hat{g}\right) \\
\qquad\qquad\qquad + \textstyle\sum_{j \in [\ell+1,\log(n)-1]} e\left(\keyProof_{k,j}, [\tau^{2^j} \cdot \hat{g}] - \omega^{k\cdot 2^{j}} \cdot \hat{g}\right) \\
\quad \pcif c_{\log(n)-1-\ell} = 1: \pcComment{$\left(\keyPoly_E\right)_{c,\ell}$ is a right child} \\
\qquad \pccheck e(\keyCom_E,\hat{g}) = e\left(\newKeyProof_{\ell}, [\tau^{2^\ell} \cdot \hat{g}] + \omega^{c\cdot 2^{\ell}} \cdot \hat{g}\right) \\
\qquad\qquad\qquad + \textstyle\sum_{j \in [\ell+1,\log(n)-1]} e\left(\newKeyProof_{j}, [\tau^{2^j} \cdot \hat{g}] - \omega^{c\cdot 2^{j}} \cdot \hat{g}\right) \\
\pccheck \AMTVerify\left(\keyPoly, k, 0, \left(\keyProof_{k,j}\right)_{j\in[0,\log(n)-1]}\right) \\
\pccheck \AMTVerify\left(\keyPoly_E, c, 0, \left(\newKeyProof_{j}\right)_{j\in[0,\log(n)-1]}\right)
}
\end{pcvstack}
\caption{Verifying that $\keyPoly+\keyPoly_E$ is an append-only update.
Here, $k_0,\allowbreak{} \dots,\allowbreak{} k_{\log(n)-1}$ and $c_0,\dots,c_{\log(n)-1}$ are the bits of $k$ and $c$.
\label{fig:append}}
\end{figure}

\section{\NoCaseChange{\acs{prim}} Instantiation}\label{sec:sadConstruct}\label{sec:primConstruct}

\subsection{Construction Overview}\label{sec:overview}

We now introduce our \ac{prim} construction assuming a fixed table of keys (with public commitment $\keyCom$ defined next) and walk through the system's operations in an epoch $E$. Adding new users and further details are treated in \cref{sec:fullSAD} below.
Generally, all database entries are mapped to the finite field $\FF$ and updating an entry $v_i$ to $v_i'$ is seen as the addition of $\delta_i = v_i'-v_i$ to $v_i$.
All vector commitments are built using polynomial commitments and interpolation over $\subgroup = \{1, \omega, \dots, \omega^{n-1}\} \subset \FF$ with the Lagrange basis $\lag_0, \dots, \lag_{n-1}$.
The provider keeps KZG commitments $\dbCom \in \hat{\GG}, \keyCom, \sigCom \in \GG$ to three polynomials:
\begin{itemize}
\item For clients' values: $\dbPoly(x) = \sum_i v_i \cdot \lag_i(x)$, with an AMT proof tree for lookups.
\item For clients' keys: $\keyPoly(x) = \sum_i \sk_i \cdot \lag_i(x)$, with a proof tree as in \cref{sec:secretTree}.
\item Linking values to keys: $\sigPoly(x) = \sum_i \sk_i \cdot v_i \cdot \lag_i(x)$.
\end{itemize}

The clients' public keys are $\sk_i \cdot g \in \GG$.
As the secrets $\sk_i$ are unknown to the provider, $\keyPoly$ and $\sigPoly$ are only processed homomorphically using client-provided helper values (like $\sk_i \cdot \lag_i(\tau) \cdot g$).
Besides the polynomial commitments and the proof trees, the provider further maintains a {zerocheck} proof that $\dbPoly \cdot \keyPoly - \sigPoly$ vanishes over $\subgroup$.
This proof establishes that $\dbCom$ is consistent with $\sigCom$ and $\keyCom$.
$\sigCom$, in turn, is independently tied to $\keyCom$ through a specialized signature scheme.
In brief, when assuming that $\keyCom$ is correct, it is first enforced that $\sigCom$ only gets updated with values that were signed by users in a given epoch.
Then, via the zerocheck, $\dbCom$ is authenticated against $\sigCom$ by \enquote{factoring out} $\sk_i$ from each entry $\sk_i \cdot v_i$ in $\sigCom$.

\subheading{User and provider actions.}
Client $i$ requests an update in epoch $E$ by sending a signature $\sig_i$ over a desired update $\delta_i$ to his value, computed by \Sign{} as
\[
\sig_i = \sk_i \cdot (\groupHash(E) + \delta_i \cdot [\lag_i(\tau) \cdot \hat{g}]) \in \hat{\GG}
\]
where $\groupHash$ is a hash function to $\hat{\GG}$.
This signing algorithm is similar in spirit to the linearly homomorphic scheme from \cite{ap19}.
However, our scheme works over KZG commitments instead of field elements.\footnote{
The scheme from \cite{ap19} signs $\delta_i$ as $\sig_i = \sk_i \cdot (\groupHash(E) + \delta_i \cdot g^*)$ with a fixed base $g^*$.
We replace $g^*$ with $\lag_i(\tau) \cdot g$ and thereby sign a \emph{KZG commitment} to $\delta_i \cdot \lag_i(x)$.
With varying base points, the security guarantee from \cite{ap19} does not fully carry over and we cannot use $\sum_i \sig_i$ alone to verify $\dbCom$ (or $\sigCom$).
Hence, the zerocheck is needed. }
In particular, user $i$ signs an update $\delta_i$ using the $i$th Lagrange basis polynomial $\lag_i$.
It is assumed that each client makes at most one such balance update in an epoch.
If users wish to make frequent updates, the epoch length may be reduced.
Looking ahead, our evaluation (\cref{sec:eval}) indicates that the prover is indeed performant enough to keep up with short epochs.
Otherwise, multiple \enquote{real} updates could be aggregated into a single protocol-level update, which is the one the user would sign.
The user does \emph{not} sign multiple updates in one epoch, as the provider could otherwise tamper with the user's balance.

Let $I$ denote the set of indices of clients who updated their balances in this epoch.
From these updates, over the course of epoch $E$, the provider constructs:

\begin{itemize}
\item a commitment $\sigCom_E \in \GG$, for $\sigPoly_E(x) = \sum_{i \in I} \sk_i \cdot \delta_i \cdot \lag_i(x)$,
\item the aggregated public key $\apk_E = \sum_{i \in I} \sk_i \cdot g$,
\item an APK proof for $\apk_E$ with respect to $\keyPoly$ and
\item an aggregate signature
\[
\sig_E = \sum_{i \in I} \sig_i = \left( \sum_{i \in I} \sk_i \right) \cdot \groupHash(E) + \sigPoly_E(\tau) \cdot \hat{g}.
\]
\end{itemize}

When a client sends a request, the provider first checks the signature with \VerifySig{} (and that the value update is allowed according to an application-specific policy).
The provider then updates $\sigCom$, $\sigCom_E$, $\dbCom$, $\sig_E$, $\apk_E$ and the lookup, APK and zerocheck proofs (\UpdateDB{}).

At the end of the epoch, the provider finalizes the APK proof (as part of \Publish{}, lines~\ref{line:apkFinalizeStart}--\ref{line:apkFinalizeEnd} in \cref{fig:prim}).
This is the only inherent computational work for the provider at the end of an epoch---the other steps of proof creation can be performed online, as soon as client updates are issued.
The provider publishes $\sig_E$, $\apk_E$, the APK proof, $\sigCom_E$, $\dbCom$ and the zerocheck proof.

\subheading{Verification by Auditors (\VerifyDB{})} We assume that auditors remember the latest values $\sigCom$ and $\keyCom$.
First, they check the APK proof for $\apk_E$ with respect to $\keyCom$ and then use $\apk_E$ and $\sig_E$ to verify $\sigCom_E$ (line~\ref{line:sigVer} in \cref{fig:prim}).
This ensures that $\sigCom_E$ is a commitment to deltas signed by legitimate users.
Then, auditors update $\sigCom$ by adding $\sigCom_E$.
Finally, they verify the zerocheck for $\dbPoly \cdot \keyPoly - \sigPoly$, which makes sure that $\dbCom$ must contain the legitimate users' values (line~\ref{line:zerocheckVer} in \cref{fig:prim}).
The users who updated their balance query for inclusion proofs of their values in $\dbCom$ (\VerifyLU{}).
This is unavoidable to detect the omission of any particular update.

\subsection{Full Specification}\label{sec:fullSAD}\label{sec:fullPrim}

\begin{figure*}[!t]
\centering
\begin{pchstack}[space=0.3\baselineskip]
\begin{pcvstack}[boxed]
\begin{pchstack}[space=0.4\baselineskip]
    \begin{pcvstack}[space=0.5\baselineskip]
        \procedure[linenumbering,lnstart=0,bodylinesep=\belowLine]{$\Setup(\lambda, n)$}{
            \text{pick $\GG, \hat{\GG}, \GG_T$ with $\lambda$-hard co-CDH/$n$-SDH} \\
            \pp = \left(\left(\tau^i \cdot g, \tau^i \cdot \hat{g}, \tau^i \cdot h, \tau^i \cdot \hat{h}\right)_{i \in [0,n-1]}, \right.\pcskipln\\
            \quad\left. \tau^n \cdot \hat{g}, \tau^n \cdot h, \tau^n \cdot \hat{h}\right) \\
            \pcreturn \GG, \hat{\GG}, \GG_T, \pp
        }
        \procedure[linenumbering,lnstart=3,bodylinesep=\belowLine]{$\AddPK(i, \pk_i^*)$}{
            \pccheck i = \perm(\perm(k) + |J|) \\
            \pcparse \pk_i^* \text{ as } \left(\pk_i, \sk_i \cdot \lag_i(\tau) \cdot g, \right. \pcskipln\\
                \quad \left(\zeroQuoCom_{i,j}\right)_{j \in [0,n-1]}, \apkRemCom_i, \left(\sk_i \cdot \lagProof_{i,j}\right)_{j \in [0,\log(n)-1]}, \pcskipln\\
            \quad\ \left. \sk_i \cdot h, \sk_i \lag_i(\tau) \cdot h, \left(\widetilde{\zeroQuoCom_{i,j}}\right)_{j \in [0,n-1]}\right) \\
            \pccheck \VerifyHelpers(\pk_i^*) \pcskipln\\
            \pcComment{update hints} \\
            \apkRemDegCom_i = [\sk_i \cdot \lag_i(\tau) \cdot g] - [\sk_i \cdot g]/n \\
            \pcfor j \in [0,n-1]: \\
            \quad \zeroQuoCom_j = \zeroQuoCom_j + \zeroQuoCom_{i,j} \\
            \quad \widetilde{\zeroQuoCom_j} = \widetilde{\zeroQuoCom_j} + \widetilde{\zeroQuoCom_{i,j}} \pcskipln\\
            \pcCommentLine{update commitment and append proof} \\
            \keyCom = \keyCom + [\sk_i \cdot \lag_i(\tau) \cdot g] \\
            \AMTMaintain(\keyCom, i, \pk_i) \\ \keyCom_E = \keyCom_E + [\sk_i \cdot \lag_i(\tau) \cdot g] \\
            \AMTMaintain(\keyCom_E, i, \pk_i) \\
            J = J \cup \{i\}
        }
    \end{pcvstack}
    \begin{pcvstack}[space=0.5\baselineskip]
        \procedure[skipfirstln,linenumbering,lnstart=15,bodylinesep=\belowLine]{$\UpdateDB(i, \delta_i, (\sig_i, \epsilon_i))$}{
            \pcCommentLine{update datebase commitment}\\
            \label{line:updateDBStart} \dbCom^* = \dbCom^* + \delta_i \cdot [\lag_i(\tau) \cdot \hat{g}] \pcskipln\\
            \qquad\quad + \epsilon_i \cdot [\lag_i(\tau) \cdot \hat{h}] \\
            \AMTMaintain(\dbCom^*, i, (\delta_i, \epsilon_i)) \pcskipln\\
            \pcCommentLine{update signature and proof}\pcskipln\\
            \sigCom = \sigCom + \delta_i \cdot [\sk_i \cdot \lag_i(\tau) \cdot g] \pcskipln\\
            \qquad + \epsilon_i \cdot [\sk_i \cdot \lag_i(\tau) \cdot h] \\
            \zeroQuoCom = \zeroQuoCom + \delta_i \cdot \zeroQuoCom_i + \epsilon_i \cdot \widetilde{\zeroQuoCom_i} \\
            \sigCom_E = \sigCom_E + \delta_i \cdot [\sk_i \cdot \lag_i(\tau) \cdot g] \pcskipln\\
            \qquad\quad + \epsilon_i \cdot [\sk_i \cdot \lag_i(\tau) \cdot h] \\
            \sig_E = \sig_E + \sig_i \pcskipln\\
            \pcCommentLine{update APK and APK proof}\\
            \apk_E = \apk_E + \pk_i \\
            \binPoly = \binPoly + \lag_i \\
            \binCom = \binCom + [\lag_i(\tau) \cdot g] \\
            \hat{\binCom} = \hat{\binCom} + [\lag_i(\tau) \cdot \hat{g}] \\
            \apkRemCom = \apkRemCom + \apkRemCom_i \\
            \apkRemDegCom = \apkRemDegCom + \apkRemDegCom_i \\
            \apkQuoCom = \apkQuoCom + \zeroQuoCom_i \\
            \label{line:updateDBEnd} I = I \cup \{i\}
        }
        \procedure[linenumbering,lnstart=29,bodylinesep=\belowLine]{$\VerifySig(i, E, \pk_i, \delta_i, (\sig_i, \epsilon_i))$}{
            \pccheck e(g, \sig_i) = e(\pk_i, \hash(E) \pcskipln\\
                \quad + \delta_i \cdot [\lag_i(\tau) \cdot \hat{g}] \pcskipln\\
            \quad + \epsilon_i \cdot [\lag_i(\tau) \cdot \hat{h}])
        }
    \end{pcvstack}
\end{pchstack}
\begin{pchstack}[space=-0.085cm]
    \procedure[linenumbering,lnstart=30,bodylinesep=\belowLine]{$\Publish$}{
        \label{line:apkFinalizeStart} \text{find } \binQuo \text{ with } \binPoly-\binPoly^2 = \binQuo \cdot (x^n-1) \phantom{ab}\\
        \binQuoCom = \KZGCommit(\binQuo) \\
        \label{line:apkFinalizeEnd} \apkProof = (\binCom, \hat{\binCom}, \binQuoCom, \apkQuoCom, \label{line:afterApkFinalize} \apkRemCom, \apkRemDegCom) \\
        \dbProof = (\sigCom_E, \sig_E, \apk_E, \apkProof, \zeroQuoCom) \\
        \newKeyProof = \AMTGetProof(\keyCom_E, c) \\
        \keyProof = (\keyCom_E, k, \keyProof_k, \newKeyProof) \\
        \pcfor i \in I \cup J: \\
        \quad \dbProof_i = \AMTGetProof(\dbCom^*, i) \\
        \pcfor i \in J: \\
    \quad \keyProof_i = \AMTGetProof(\keyCom, i) }
    \procedure[linenumbering,skipfirstln,lnstart=40,bodylinesep=\belowLine]{\phantom{abc}}{
        \pcCommentLine{prepare next proofs} \\
        k = \perm(\perm(k) + |J|) \\
        \keyProof_k = \AMTGetProof(\keyCom, k) \\
        c = \perm(\perm(k)-1) \pcskipln\\
        \pcCommentLine{reset variables} \\
        \label{line:providerVarsStart} I, J = \emptyset \\
        \binPoly = 0 \in \FF[x] \\
        \sigCom_E, \binCom, \binQuoCom, \apkRemCom, \apkRemDegCom, \apkQuoCom, \apk_E, \keyCom_E = 0 \in \GG \\
        \label{line:providerVarsEnd} \sig_E, \hat{\binCom} = 0 \in \hat{\GG} \\
        \pcreturn \dbCom^*, \keyCom, \dbProof, \keyProof, \left(\dbProof_i\right)_{i \in I}, \label{line:publishEnd} \left(\keyProof_i, \dbProof_i\right)_{i \in J}
    }
\end{pchstack}
\end{pcvstack}

\begin{pcvstack}[space=0.43\baselineskip]
\begin{pcvstack}[boxed,space=0.55\baselineskip]
    \procedure[linenumbering,lnstart=48,bodylinesep=\belowLine]{$\KeyGen(i)$}{
        \sk_i \sample \FF \\
        \pk_i = \sk_i \cdot g \\
        \pk_i^* = \left(\pk_i, \sk_i \cdot [\lag_i(\tau) \cdot g], \left(\zeroQuoCom_{i,j}\right)_{j \in [0,n-1]}, \right.\pcskipln\\
            \quad\left. \apkRemCom_i, \left(\sk_i \cdot \lagProof_{i,j}\right)_{j \in [0, \log(n)-1]}, \sk_i \cdot h, \right.\pcskipln\\
        \quad\left. \sk_i \cdot [\lag_i(\tau) \cdot h], \left(\widetilde{\zeroQuoCom_{i,j}}\right)_{j \in [0,n-1]}\right) \\
        \pcreturn \sk_i, \pk_i^*
    }
    \procedure[linenumbering,lnstart=52,bodylinesep=\belowLine]{$\VerifyPK(\keyCom, i, \pk_i, (\keyProof_i, \dbProof_i); \dbCom^*)$}{
        \pccheck \AMTVerify(\keyCom, i, \pk_i, \keyProof_i) \\
        \pccheck \VerifyLU(\dbCom^*, i, (0,0), \dbProof_i)
    }
    \procedure[linenumbering,lnstart=54,bodylinesep=\belowLine]{$\Sign(\sk_i, E, \delta_i; \maskPoly_i)$}{
        \text{pick random } \epsilon_i \\
        \maskPoly_i = \maskPoly_i + \epsilon_i \\
        \sig_i = \sk_i \cdot \left(\hash(E) + \delta_i \cdot [\lag_i(\tau) \cdot \hat{g}] \right.\pcskipln\\
        \qquad\qquad\left. + \epsilon_i \cdot [\lag_i(\tau) \cdot \hat{h}]\right) \\
        \pcreturn \sig_i, \epsilon_i
    }
    \procedure[linenumbering,lnstart=58,bodylinesep=\belowLine]{$\VerifyLU(\dbCom^*, i, \dbPoly_i, \dbProof_i; \maskPoly_i)$}{
        \pccheck \AMTVerify(\dbCom^*, i, v_i \cdot \hat{g} + \maskPoly_i \cdot \hat{h}, \dbProof_i)
    }
\end{pcvstack}
\begin{pcvstack}[boxed,space=0.5\baselineskip]
    \procedure[linenumbering,lnstart=59,bodylinesep=\belowLine]{$\VerifyDB(\keyCom, E, \dbCom^*, \dbProof; \sigCom)$}{
        \label{line:verifyDbStart} \pcparse \dbProof \text{ as } (\sigCom_E, \sig_E, \apk_E, \apkProof, \zeroQuoCom) \\
        \pccheck \APKVerify(\keyCom, \apk_E, \apkProof) \\
        \label{line:sigVer} \pccheck e(g, \sig_E) = e(\apk_E, \hash(E)) + e(\sigCom_E, \hat{g}) \\
        \label{line:zerocheckVer} \pccheck e(\keyCom, \dbCom^*) = e(\sigCom + \sigCom_E, \hat{g}) \pcskipln\\
        \quad + e(\zeroQuoCom, [\tau^n \cdot \hat{g}] - \hat{g}) \\
        \label{line:verifyDbEnd} \sigCom = \sigCom + \sigCom_E
    }
    \procedure[linenumbering,lnstart=64,bodylinesep=\belowLine]{$\VerifyKeys(\keyCom, E, \keyCom', \keyProof)$}{
        \pcparse \keyProof \text{ as } \left(\keyCom_E, k, \left(\keyProof_{k,j}\right)_{j \in [0,\log(n)-1]}, \right.\pcskipln\\
        \quad\left. \left(\newKeyProof_j\right)_{j \in [0,\log(n)-1]}\right) \\
        \pccheck \VerifyAppend\left(\keyCom, \keyCom_E, k, \right.\pcskipln\\
        \quad\left. \left(\keyProof_{k,j}, \newKeyProof_{j}\right)_{j \in [0,\log(n)-1]}\right) \\
        \pccheck \keyCom' = \keyCom + \keyCom_E
    }
\end{pcvstack}
\end{pcvstack}
\end{pchstack}
\vspace{-1em}
\caption{
\Ac{prim} algorithms for the provider, for users and for auditors.
\label{fig:prim}
\label{fig:prim2}
}
\end{figure*}

To describe our scheme fully (\cref{fig:prim}), we specify the zerocheck proof as a commitment $\zeroQuoCom \in\GG$ to a polynomial $\zeroQuo$ as in \cref{sec:quotients}. The APK proof $\apkProof$ consists of commitments $\binCom, \binQuoCom, \apkRemCom, \apkQuoCom, \apkRemDegCom \in \GG, \hat{\binCom}\in\hat{\GG}$ for, respectively, a binary-valued polynomial $\binPoly$, to a binarity proof $\binQuo$, to sumcheck quotients $\apkRem$ and $\apkQuo$ and to a degree checker $\apkRemDeg$---all defined as in \cref{sec:apkproof}.
Initially, these are 0.

For the provider to construct and update $\zeroQuoCom, \apkRemCom, \apkRemDegCom$ and $\apkQuoCom$, users provide helper values, from which $\zeroQuoCom_i, \apkRemCom_i$ are computed as in \cref{sec:apkproof}.
Defining $\zeroQuo_i$ as the polynomial committed in $\zeroQuoCom_i$, using interpolation of $\keyPoly$ and rearranging:
\begin{align*}
\lag_i \cdot \keyPoly
&= \sum_{j \in [0,n-1]} \lag_i \cdot \lag_j \cdot \sk_j  \\
&= \sk_i \cdot \lag_i + (\lag_i^2-\lag_i) \cdot \sk_i + \sum_{j \in [0,n-1] \setminus \{i\}} \lag_i \cdot \lag_j \cdot \sk_j \\
&= \sk_i \cdot \lag_i + \zeroQuo_i \cdot (x^n-1).
\end{align*}
Hence, when an update $\delta_i$ is applied to $v(\omega^i)$, we can write
\[
(\dbPoly + \delta_i \cdot \lag_i) \cdot \keyPoly = (\sigPoly + \delta_i \cdot \sk_i \cdot \lag_i) + (\zeroQuo + \delta_i \cdot \zeroQuo_i) \cdot (x^n-1).
\]
This yields a joint update formula for $\zeroQuoCom$ and $\sigCom$, based on $\zeroQuoCom_i$, when $\delta_i$ is added to $\dbPoly$.
Similarly, $\apkQuoCom$ can also be updated by adding $\zeroQuoCom_i$.
To update $\apkRemCom$, it suffices to add $\apkRemCom_i$.
To update $\apkRemDegCom$, it suffices to add $\left([\sk_i \cdot \lag_i(\tau) \cdot g] - 1/n \cdot \pk_i\right)$.
Now, $\binQuoCom$ cannot be computed incrementally due to the quadratic relationship of $\binQuo$ with $\binPoly$.
Instead, $\binQuoCom$ gets computed from scratch at the end of an epoch using an $O(n \cdot \log(n))$ Fast Fourier Transform approach (representing the only inherent end-of-epoch computation overhead).

\subheading{Adding users.}
To register, a new user sends a request and is given a free index $i$.
Then, the user locally generates a key with \KeyGen{} and sends the formal public key $\pk_i^*$ to the server, where $\pk_i^*$ includes the proper public key $\pk_i$ as well as all helper values.
After the next updated commitment $\keyCom'$ is published, the user asks for an inclusion proof of $\pk_i$ and verifies it with \VerifyPK{}.
The inclusion proof here uses the tree construction from \cref{sec:secretTree}.
To this end, the helper values must also include the $\log(n)$ necessary $\sk_i \cdot \lagProof_{i,j}$ terms as in \cref{sec:secretTree}.

The provider constructs a proof that the change from $\keyCom$ to $\keyCom'$ is append-only. Let $E$ be the current epoch and let $J$ be the set of newly registered indices.
First, we ensure that the indices in $J$ follow a specific order, namely the typical FFT order of roots of unity in $\domain$.
Then, throughout the epoch, the provider builds up the commitment $\keyCom_E$ to $\keyPoly_E(x) = \sum_{i \in J} \sk_i \cdot \ell_i(x)$ which interpolates all the new keys.
Using the efficient \AMTMaintain{} routine, the provider also maintains an opening proof tree for $\keyCom_E$.
Suppose that at the start of epoch $E$, the key table commitment was $\keyCom$, that the last added user in $\keyCom$ was at index $c$ and that the next index after $c$ in FFT order was $k$.
This means that $\keyCom$ is {empty} from index $k$ onwards while $\keyCom_E$ is {empty} at all indices up to and including $c$.
Therefore, following \cref{sec:append}, we can use the respective proof tree paths for $\keyPoly(\omega^k)$ and $\keyPoly_E(\omega^c)$ to obtain append-only proofs. Specifically, when epoch $E$ ends, the provider outputs said proof tree paths (\Publish{}), which auditors will check during \VerifyKeys{} (in turn using \VerifyAppend{} as defined in \cref{sec:append}). Then, $\keyCom' = \keyCom + \keyCom_E$ must be an append-only evolution of $\keyCom$.
To prevent pathological attacks, the user verifies that the database entry at index $i$ is 0 after registration.

\subheading{Ensuring privacy.}
To ensure that all commitments and proofs hide the content of $\dbCom$, we employ an information-theoretic masking technique \cite{kzg}. Specifically, a second SRS with the same trapdoor $\tau$ but with base points $h \neq g, \hat{h} = \log_g(h) \cdot \hat{g}$ is assumed. This SRS is used for a masking polynomial $\maskPoly$ to hide $\dbPoly$, replacing $\dbCom$ with $\dbCom^* = \dbPoly(\tau) \cdot \hat{g} + \maskPoly(\tau) \cdot \hat{h}$.
The expression $\AMTMaintain(\dbCom^*, i, (\delta_i, \epsilon_i))$ is then understood as operating on the proof tree for $\dbCom^*$ and adding  $\delta_i \cdot [\lag_i(\tau) \cdot g] + \epsilon_i \cdot [\lag_i(\tau) \cdot h]$ to the nodes on the $i$th path (cf. \cref{sec:kzg}).
We must also ensure that the commitments to $\sigPoly$ and each $\sigPoly_E$ are hiding.
To this end, each time a user signs an update $\delta_i$, the user will simultaneously sign a re-randomization $\epsilon_i$ for the corresponding value in $\maskPoly$.
This way, $\sigCom_E$ is replaced with $\sum_i \sk_i \cdot \lag_i(\tau) \cdot (\delta_i \cdot g + \epsilon_i \cdot h)$ and $\sigCom$ is replaced with $\sum_i \sk_i \cdot \lag_i(\tau) \cdot (v_i \cdot g + \maskPoly_i \cdot h)$.
The lookup proofs and global proofs will be made zero-knowledge by also incorporating the randomizers $\maskPoly_i$ in them.
Note that the provider indeed knows the $\maskPoly_i$ and each of their re-randomizations $\epsilon_i$.
Thus, users only provide extra helper values with base point $h$ for the provider to be able to construct the right proofs.
We write these helpers as $\widetilde{\zeroQuoCom_{i,j}} = \sk_i \cdot \frac{\lag_i(\tau) \cdot \lag_j(\tau)}{\tau^n-1} \cdot h$.

\subsection{Security Analysis}\label{sec:analysis}

We now analyze the security of the \ac{prim} scheme in \cref{fig:prim}.
The full proofs are in \cref{sec:proofs}.

We start by showing that $\keyCom$ is, in fact, append-only due to \VerifyAppend{} as explained in \cref{sec:append}.
This property is crucial as it enables users to verify the inclusion of their key in $\keyCom$ \emph{once} and consider $\keyCom$ as trusted from then on.

\begin{prop}\label{prop:regSec}\label{prop:appendOnly}
If $\VerifyKeys(\keyCom, E, \keyCom', \keyProof)=1$, $\keyCom'$ is append-only from $\keyCom$.
\end{prop}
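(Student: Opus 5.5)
We read \cref{prop:appendOnly} as the following claim. Suppose $\keyCom$ and $\keyCom_E$ commit to polynomials $\keyPoly$ and $\keyPoly_E$ of degree $\leq n-1$. Then there is a split point $k$ such that $\keyPoly(\omega^i)=0$ for all $i$ with $\perm(i)\geq\perm(k)$, and $\keyPoly_E(\omega^i)=0$ for all $i$ with $\perm(i)<\perm(k)$. Consequently $\keyCom'=\keyCom+\keyCom_E$ agrees with $\keyCom$ on every occupied index and only adds entries at positions $\geq\perm(k)$ in FFT order. The equality $\keyCom'=\keyCom+\keyCom_E$ is checked directly in \VerifyKeys{}, so the whole task is to extract the two vanishing statements from \VerifyAppend{}.

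We work in the standard setting for KZG soundness: an algebraic group model (or knowledge assumption) together with $n$-SDH. This lets us treat every group element produced by the prover as a known polynomial evaluated at $\tau$. A pairing equation that holds at the secret $\tau$ then holds as a polynomial identity in $x$, except with negligible probability, since otherwise $\tau$ could be extracted.

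\textbf{Step 1 (AMT structure).} For a fixed index $k$, let $\keyProof_{k,j}$ commit to $q_j(x)$. The final \AMTVerify{} check becomes the identity
\[\keyPoly(x)-0=\textstyle\sum_j q_j(x)\,(x^{2^j}-\omega^{k2^j}).\]
Define $\keyPoly_{k,\ell}$ by peeling off the top levels:
\[\keyPoly_{k,\ell}=\keyPoly-\textstyle\sum_{j\geq \ell} q_j\,(x^{2^j}-\omega^{k 2^j}).\]
Then $\keyPoly_{k,\ell}$ agrees with $\keyPoly$ on the coset $\{x: x^{2^\ell}=\omega^{k2^\ell}\}$, which has $2^\ell$ elements. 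For each $\ell$ with $k_{\log(n)-1-\ell}=0$, the level-$\ell$ check in \VerifyAppend{} gives
\[\keyPoly=q_\ell\,(x^{2^\ell}+\omega^{k2^\ell})+\textstyle\sum_{j>\ell}q_j\,(x^{2^j}-\omega^{k2^j}).\]
Hence $\keyPoly$ vanishes on the sibling coset $\{x: x^{2^\ell}=-\omega^{k2^\ell}\}$ intersected with $\{x: x^{2^{j}}=\omega^{k2^{j}}\ \forall j>\ell\}$. Both checks share the terms with $j>\ell$, so subtracting them isolates the level-$\ell$ factor and makes this clean.

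\textbf{Step 2 (combinatorics of bit reversal).} We show that these sibling sets, taken over all $\ell$ whose corresponding bit of $k$ is $0$, together with $\{\omega^k\}$ from the final \AMTVerify{} check, partition exactly $\{\omega^i:\perm(i)\geq\perm(k)\}$. The argument: $\omega^i$ lies in the level-$\ell$ sibling set iff $i$ agrees with $k$ on the low-order bits determining the cosets at levels $>\ell$ and differs at bit $\log(n)-1-\ell$. In reversed order this means $\perm(i)$ and $\perm(k)$ share a prefix and then $\perm(i)$ has a $1$ where $\perm(k)$ has a $0$, i.e.\ $\perm(i)>\perm(k)$ branching at that position. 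This is the lexicographic characterisation of ``greater than,'' so ranging over all such branch positions covers every $i$ with $\perm(i)>\perm(k)$.

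\textbf{Step 3 (the new keys).} The symmetric argument, with $c=\perm(\perm(k)-1)$ and checks at the bits where $c$ has a $1$, shows that $\keyPoly_E$ vanishes at $\omega^c$ and at every $\omega^i$ with $\perm(i)<\perm(c)$. That is exactly the set $\perm(i)<\perm(k)$.

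The main obstacle is Step 2: getting the bit-index conventions right, i.e.\ checking that the exponents $\omega^{k2^\ell}$ select the correct bits of $k$ and match $\perm$, and that the left/right-child tests in \cref{fig:append} correspond to ``$\perm(i)>\perm(k)$.'' I would first verify this on the $n=8$ example of \cref{fig:tree} and then argue by induction on $\log(n)$. A secondary subtlety is the edge case $\perm(k)=0$, where $c$ is undefined; there the $\keyPoly_E$ condition is vacuous and should be handled separately.
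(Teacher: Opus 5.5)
Your proposal is correct and follows the paper's own argument from the append-only subsection. In both, the quotient $\keyProof_{k,\ell}$ doubles as a proof that $\keyPoly$ vanishes on the sibling coset $\{x^{2^\ell}=-\omega^{k2^\ell}\}$, and the bit-reversal bookkeeping (checks only where the relevant bit of $k$ is $0$, resp.\ of $c$ is $1$) makes these cosets together with $\omega^k$ (resp.\ $\omega^c$) cover exactly $\perm(i)\geq\perm(k)$ (resp.\ $\perm(i)<\perm(k)$); your AGM framing and explicit lexicographic partition make rigorous what the paper states informally (it gives no separate appendix proof), and your index conventions check out, e.g.\ on the $n=8$, $k=n/2$ example of \cref{fig:tree}.
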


Next, we have a lemma that states that a simplified version of $(\KeyGen{}, \Sign{}, \VerifySig{})$ is  an existentially unforgeable signature.
The simplified scheme can be seen as an extension of BLS \cite{bls04} or as a special case of Boneh et al.'s \cite{homBLS} or the scheme from \cite{ap19}.
For further details, see \cref{sec:proofsPrelims}.

\begin{prop}\label{prop:uf}
The signature that signs $(E, \delta)$ as $\sk \cdot (\groupHash(E) + \delta \cdot \hat{g})$ is unforgeable.
\end{prop}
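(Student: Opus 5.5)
We prove existential unforgeability under chosen-message attack in the random oracle model for $\groupHash$, by reduction to co-CDH in $(\GG,\hat{\GG})$; this is the hardness assumption already fixed in \Setup{}. The scheme has public key $\pk = \sk\cdot g$ and signs $(E,\delta)$ as $\sig = \sk\cdot(\groupHash(E) + \delta\cdot\hat{g})$. Verification is $e(g,\sig) = e(\pk, \groupHash(E)+\delta\cdot\hat{g})$. Following the security experiment, we use the stateful convention that each epoch $E$ is signed at most once.

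The reduction $\advB$ receives a co-CDH instance $(a\cdot g,\ a\cdot\hat{g},\ b\cdot\hat{g})$ and must output $ab\cdot\hat{g}$. It sets $\pk = a\cdot g$, so $\sk=a$ is unknown to it. The difficulty is that every signature contains the term $\sk\cdot\delta\cdot\hat{g} = \delta\cdot[a\cdot\hat{g}]$. This term is computable from the instance, so the $\delta$-part causes no trouble for simulation.

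The random oracle is programmed Coron-style. For each fresh query $E$, $\advB$ flips a biased coin. With probability $1-p$ it sets $\groupHash(E) = r_E\cdot\hat{g}$ for random $r_E$. With probability $p$ it sets $\groupHash(E) = r_E\cdot\hat{g} + b\cdot\hat{g}$.

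A signing query $(E,\delta)$ with $\groupHash(E)=r_E\cdot\hat{g}$ is answered as $(r_E+\delta)\cdot[a\cdot\hat{g}]$, which is a correctly distributed signature. If the query hits an embedded epoch, $\advB$ aborts.

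Now suppose the adversary outputs a valid forgery $(E^*,\delta^*,\sig^*)$ on a fresh epoch $E^*$ whose hash is embedded. Then $\sig^* = a r_{E^*}\cdot\hat{g} + ab\cdot\hat{g} + \delta^*\cdot[a\cdot\hat{g}]$. From this, $\advB$ extracts
\[
ab\cdot\hat{g} \;=\; \sig^* - (r_{E^*}+\delta^*)\cdot[a\cdot\hat{g}].
\]
Validity of the forgery, i.e.\ that $\sig^*$ really has this form, follows from the pairing equation because $e(g,\cdot)$ is injective. The standard choice $p\approx 1/(q_S+1)$ yields a loss of about $e\cdot q_S$, or equivalently a loss of $q_H$ if we simply guess the forged epoch.

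The main obstacle is the definition of a forgery. Because the scheme is linearly homomorphic in $\delta$ for a fixed $E$, signatures on two different $\delta$ under the same epoch would let anyone solve for $\sk\cdot\groupHash(E)$ and then sign any $\delta'$. Unforgeability therefore has to be stated for forgeries on an epoch $E^*$ that was never queried, matching the one-signature-per-epoch rule in $\gameDbSec$, where repeated queries $(j,\cdot)$ return the old answer.

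I would make this restriction explicit in the game $\gameUf$. With that restriction, a forgery on a queried epoch with a new $\delta$ is ruled out by definition. The only remaining case is a fresh epoch, which is exactly the case the reduction above handles.
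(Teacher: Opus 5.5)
There is a genuine gap: you prove a strictly weaker notion than the one the paper states and relies on. In the paper's formalization $\gameUf$ (\cref{sec:proofsPrelims}), the adversary also wins if it queried $(E,\delta^{(E)})$ and then outputs a valid signature on $(E,z)$ with $z\neq\delta^{(E)}$. You exclude this case by definition, but your reason for doing so does not hold. The homomorphic attack you describe needs \emph{two} signatures under one epoch, and the one-response-per-epoch rule exists precisely to rule that out. From a single $\sk\cdot(\groupHash(E)+\delta\cdot\hat{g})$, a signature on $(E,z)$ requires $(z-\delta)\cdot\sk\cdot\hat{g}$, i.e.\ moving $\sk$ from $\sk\cdot g$ into $\hat{\GG}$, and nothing in the game hands this out.

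This excluded case is exactly the one the paper needs. In the proof of \cref{prop:singleRound}, both cases build signatures ($\mu\cdot\sig+(1-\mu)\cdot\sig'$, resp.\ $\sig-\sig'$) for the \emph{already queried} epoch $E$ on messages other than $\delta$. This models a provider that holds the honest user's signed update for epoch $E$ and applies a different one. Likewise, the rule in $\gameDbSec$ that you invoke only limits the signing oracle; its winning condition still includes a wrong value in an epoch the user did sign. A fresh-epoch-only notion cannot support these arguments.

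Your reduction also cannot simply be extended to cover this case. It holds $a\cdot\hat{g}$. More generally, any simulator that must answer $(E,\delta)$ for arbitrary $\delta$ after $\groupHash(E)$ is fixed can compute $\sk\cdot\hat{g}$ from two possible answers. A same-epoch forgery therefore only returns $(z-\delta)\cdot a\cdot\hat{g}$, which the simulator already knows.

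The missing case needs a different argument, for instance in the algebraic group model, which is among the references the paper points to. Write the forgery as $c_0\cdot\hat{g}+\sum_j c_j\cdot\groupHash(j)+\sum_j d_j\cdot\sig_j$ with $\groupHash(j)=h_j\cdot\hat{g}$. Then one of two things happens:
\begin{itemize}
\item $h_E+z\neq\sum_j d_j\,(h_j+\delta_j)$. Comparing with the verification equation lets a simulator that knows the $h_j$ (and signs using $\sk\cdot\hat{g}$) solve for $\sk$.
\item Equality holds. Because $z\neq\delta^{(E)}$, this is a nontrivial linear relation among the $h_j$. A simulator that knows $\sk$ and sets $h_j=r_j+s_j\,y$ for a discrete-logarithm challenge $y\cdot\hat{g}$ then recovers $y$, except with probability $1/|\FF|$.
\end{itemize}

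Your fresh-epoch argument is correct; it is the standard BLS reduction. But it covers only half of the winning condition.
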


With \cref{prop:uf}, we can prove that our \ac{prim} satisfies a single-round version of data security.
I.e., when the public keys in $\keyCom$ are fixed and if the adversary is restricted to carrying out the data corruption in a single epoch, then the existence of a successful adversary would contradict \cref{prop:uf}.
For the full proof, which relies on the APK proof, AMT and zerocheck, see \cref{sec:proofSingleRound}.

\begin{prop}\label{prop:linHom}\label{prop:singleRound}
There is no single-round attacker (formalized in \cref{sec:proofSingleRound}) against the \ac{prim} in \cref{fig:prim}.
\end{prop}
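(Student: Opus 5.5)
We prove \cref{prop:singleRound} by reduction to \cref{prop:uf}. Suppose a single-round attacker $\advA$ exists: given a fixed key commitment $\keyCom$ containing the honest user's key $\pk_i$ at index $i$ (all other keys chosen by $\advA$, together with their helper values), and access to the honest user's signing oracle for epoch $E$, it outputs $\dbProof = (\sigCom_E, \sig_E, \apk_E, \apkProof, \zeroQuoCom)$ and $\dbCom^*$ such that $\VerifyDB$ accepts. It also outputs a lookup proof $\dbProof_i$ showing that the $i$th entry differs from the honestly signed value. We build a forger $\advB$ against the scheme of \cref{prop:uf}, i.e.\ signatures $\sk\cdot(\groupHash(E)+\delta\cdot\hat g)$.

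$\advB$ embeds the challenge key as $\pk_i$. It samples $\tau$ itself when generating $\pp$; we must check that the knowledge of $\tau$ does not help break \cref{prop:uf}, since co-CDH-type unforgeability there is independent of the SRS trapdoor. Knowing $\tau$, $\advB$ can compute all of user $i$'s helper values from $\pk_i$ alone, as each is $\pk_i$ times a known scalar (e.g.\ $\sk_i\lag_i(\tau)\cdot g = \lag_i(\tau)\cdot\pk_i$, and the $h$-variants use $\log_g h$, which $\advB$ also picks). It answers an oracle query $(\delta,\epsilon)$ by querying the base oracle on $\delta' = \lag_i(\tau)(\delta + \epsilon\log_g h)$. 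The base signature $\sk(\groupHash(E)+\delta'\hat g)$ is exactly our $\sig_i$.

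The extraction step combines the accepting verification equations.
\begin{itemize}
\item APK soundness, from the binarity check, the sumcheck, the degree check and the $n$-SDH/AGM assumption, gives $\apk_E = \sum_{j\in S}\pk_j$ for some set $S$.
\item The signature check (line~\ref{line:sigVer}) then gives $\sig_E = \sum_{j\in S}\sk_j\groupHash(E) + \hat\sigCom_E$.
\item The zerocheck (line~\ref{line:zerocheckVer}) together with AMT binding of $\dbCom^*$ at index $i$ determines the $\omega^i$-evaluation of $\sigPoly_E$. Since $\sigPoly_E = \keyPoly\cdot\dbPoly_E$ on $\subgroup$, this evaluation is $\sk_i(\delta^*+\epsilon^*\log_g h)$, where $(\delta^*,\epsilon^*)$ is the value the lookup proof attests.
\end{itemize}
In the algebraic group model, $\advB$ obtains representations of $\advA$'s group elements, and therefore of the corrupt users' keys $\sk_j$, $j\neq i$. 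It subtracts their contributions from $\sig_E$ and is left with $\sk_i(\groupHash(E)+\delta^*_{\mathrm{tot}}\hat g)$, where $\delta^*_{\mathrm{tot}} = \lag_i(\tau)(\delta^*+\epsilon^*\log_g h)$. This is a base-scheme signature on a message $\delta^*_{\mathrm{tot}}$ that was never queried, so it is a forgery. Note that the honest user's lookup check fixes its own $\epsilon$-masking, so $\delta^*\neq\delta$ yields a new message except with negligible probability.

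The hardest step is isolating user $i$'s term in $\sig_E$ and $\sigCom_E$. Both are aggregates, and the adversary can choose the corrupt keys adaptively, for instance as rogue keys related to $\pk_i$. I would first rely on $\VerifyHelpers$ and the helper pairing checks, which effectively act as proofs of possession, to extract each $\sk_j$ via the AGM. I would then argue that the sum $\sigPoly_E(\tau)$ restricted to index $i$ is pinned down by the Lagrange-basis structure. If the AGM is not acceptable, the fallback is a knowledge-of-secret-key assumption on the registered helper values.
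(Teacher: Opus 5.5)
\textbf{Gap: extracting the corrupt keys.} Your reduction skeleton matches the paper's: you embed $\beta\cdot g$ as $\pk_i$, build $\pk_i^*$ from $\tau$ and $\eta=\log_g h$, forward queries as $\lag_i(\tau)\cdot(\delta+\eta\epsilon)$, and invoke AMT, zerocheck and APK soundness. The step that should produce the forgery, however, does not go through. You extract every corrupt $\sk_j$ as a known scalar, treating $\VerifyHelpers$ as a proof of possession, and then subtract the corrupt signers' share of $\sig_E$. This fails for two reasons.
\begin{itemize}
\item No such check exists. $\keyCom$ is simply output by $\advA$, and neither $\VerifyDB$ nor $\VerifyKeys$ ever looks at other users' helper values.
\item The helpers are not a proof of possession anyway. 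For $k\neq i$, $\zeroQuoCom_{i,k}$ is a nonzero multiple of $\sk_i\cdot\prod_{m\neq i,k}(\tau-\omega^m)\cdot g$, and these $n-1$ polynomials form a basis of all polynomials of degree at most $n-2$. Together with $\sk_i\lag_i(\tau)\cdot g$ (and the analogous $h$-values), anyone can form $\sk_i\cdot p(\tau)\cdot g$ for every $p$ of degree below $n$. That yields a fully valid helper set for $\pk_j=a\cdot g+b\cdot\pk_i$ at any index $j$.
\end{itemize}
So the AGM only gives $\keyPoly=\theta+\beta\kappa$, with possibly $\kappa(\omega^j)\neq 0$ for $j\neq i$. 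The terms $\beta\kappa(\omega^j)\cdot\groupHash(E)$ cannot be subtracted, since computing them would already amount to a forgery. After removing the $\beta$-free parts you are left with $\beta K\cdot\groupHash(E)+\beta\zeta(\tau)\cdot\hat{g}$, where $K=\sum_{j\in S}\kappa(\omega^j)$ may be $0$ or different from $1$. A knowledge-of-secret-key fallback just assumes the property the scheme lacks.

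\textbf{What the paper does instead.} It never extracts keys. It writes $\keyPoly=\theta+\beta\kappa$ and $\sigPoly_E=\gamma+\beta\zeta$ with $\zeta=\zeta^*+(z'-z)\lag_i$, and splits on $K$. If $K\neq 0$, it divides by $K$ to get $\beta\cdot\bigl(\groupHash(E)+\frac{\zeta(\tau)}{K}\cdot\hat{g}\bigr)$; if $K=0$, it obtains $\beta\zeta(\tau)\cdot\hat{g}$. Either is then combined with the queried $\sig$. Two further points affect your sketch:
\begin{itemize}
\item The forged message involves all of $\zeta$. You cannot restrict $\sigPoly_E(\tau)$ to index $i$ in the exponent.
\item A single zerocheck only pins $(\sigPoly+\sigPoly_E)(\omega^i)$, because the auditor's starting state is adversarial. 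The value $\sigPoly_E(\omega^i)=\beta(z'-z)$ comes from two accepted rounds.
\end{itemize}

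\textbf{The case $K\neq 1$.} Register $\pk_j:=\pk_i$ at some $j\neq i$ and set $\apk_E=2\pk_i$, $\sig_E=2\sig$, with the update doubled; all proofs are computable from the span above. Then $\zeta(\tau)/K$ equals the queried message, so the signature equation yields no forgery. The paper's case~1 tacitly assumes $\sig'\neq\sig$, and its reading of $z$ as an adversary-chosen pair does not exclude this example. What rules it out is that the honest user checks the $\hat{h}$-component of its entry against its own $\maskPoly_i$. Since $\advA$ does not know $\log_g h$, this forces $K=1$. Your remark about the user-fixed $\epsilon$ is the right lever, but you must use it to pin $K$, not merely to argue that the message is fresh.
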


Given \cref{prop:regSec,prop:linHom}, we show that the \ac{prim} is secure. Recall that an adversary breaks data security when creating \emph{any} length-$K$ history of commitments that auditors would accept where an honest user's value is corrupted.
We rule out such attacks by induction over $K$. The base case uses \cref{prop:linHom} while the induction step uses \cref{prop:appendOnly}.
The full proof is in \cref{sec:proofDbSec}.

\begin{prop}[Data Security]\label{prop:dbSec}\label{prop:dbSecInf}
The \ac{prim} in \cref{fig:prim} has data security as per \cref{def:dbSec}.
\end{prop}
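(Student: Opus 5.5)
We reduce any winning adversary $\advA$ against $\gameDbSec$ to either a single-round attacker (contradicting \cref{prop:singleRound}), a break of the AMT binding (contradicting $n$-SDH), or a violation of append-only key commitments (excluded by \cref{prop:appendOnly}). The proof is an induction over the number of epochs $K-E$ following the one in which the honest key was confirmed. Fix a winning transcript. By \cref{prop:appendOnly}, every accepted $\VerifyKeys$ step makes $\keyCom^{(j)}$ an append-only extension of $\keyCom^{(j-1)}$. Since $\VerifyPK(\keyCom^{(E)}, i, \pk_i, \keyProof_i)=1$, AMT binding shows that slot $i$ of $\keyCom^{(E)}$ holds $\sk_i$. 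Append-only then shows that slot $i$ still holds $\sk_i$ in every $\keyCom^{(j)}$ with $j \geq E$. Otherwise an extractor would obtain two valid openings at index $i$, which breaks $n$-SDH. The same check in $\VerifyPK$ that the database entry is $(0,0)$ at registration fixes the starting value of the honest user's entry at epoch $E$.

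The induction then runs on $j$ from $E+1$ up to $K$. The hypothesis at step $j$ is that $\dbCom^{(j-1)}$ opens at $i$ to $\sum_{k\in\expList\cap[E+1,j-1]}\delta^{(k)}$, with masking value $\maskPoly_i$ equal to the honest user's running sum of $\epsilon$'s. The auditor state $\sigCom^{(j-1)}$ is then consistent with that entry via the zerocheck, because $\keyCom^{(j-1)}$ is fixed at slot $i$. For the step, we embed epoch $j$ as one round of the single-round game. The reduction $\advB$ receives the single-round challenge and simulates all of $\gameDbSec$. It answers signing queries for epochs other than $j$ itself, which is possible because the single-round experiment from \cref{sec:proofSingleRound} gives a signing oracle and those epochs use distinct hash points $\groupHash(E')$. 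It then forwards epoch $j$'s commitments and proofs. If $\dbCom^{(j)}$ opens at $i$ to a value different from the previous value plus $\delta^{(j)}$ (or plus $0$ if $j\notin\expList$), then $\advB$ is a successful single-round attacker. For $j\in\expList\cap[E+1,K-1]$ the adversary must also supply a valid $\VerifyLU$ opening at $i$ with the correct sum, and $z$ at $K$ is the violating value. Hence there is a first epoch $j^*$ at which the honest entry deviates. $\advB$ guesses $j^*$ (and $K$) and loses only a polynomial factor. For epochs $j\notin\expList$ the honest entry must be unchanged, and we would argue this through the same single-round reduction, where the honest user signed nothing in that epoch.

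The main obstacle is making the single-round game a faithful embedding of epoch $j^*$. The difficulty is that the auditor state $\sigCom^{(j^*-1)}$ was built from adversarial history, and the single-round game presumably starts from a fixed, well-formed state. I would address this by defining the single-round game so that the adversary may choose arbitrary $\keyCom,\sigCom,\dbCom$ satisfying the zerocheck, subject only to slot $i$ of $\keyCom$ being $\sk_i$ and the honest entry being as claimed. The induction hypothesis supplies exactly these conditions. I would then check that \cref{prop:singleRound}'s proof covers this generality: the zerocheck extracts $\sk_i\cdot(v_i g + \maskPoly_i h)$ at slot $i$, and the APK proof and aggregate signature give a forgery under \cref{prop:uf}. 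If it does not, I would strengthen its statement accordingly. A second subtlety is simulating $\hash(E')$ for the other epochs through programming the random oracle in $\advB$.
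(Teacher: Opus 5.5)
Your skeleton matches the paper's. Its induction on $K$ is your ``first deviating epoch'' unrolled. Both arguments use \cref{prop:appendOnly} to keep slot $i$ of every $\keyCom^{(j)}$, $j\ge E$, equal to $\sk_i$, and the $(0,0)$ check inside $\VerifyPK$ for the starting value. Both then apply the single-round argument at the bad epoch; the paper re-runs the proof of \cref{prop:singleRound} on the last step instead of invoking it black-box, which is what your fallback amounts to.

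The missing step is how the honest entry of an intermediate commitment is defined at all. $\gameDbSec$ only forces $\advA$ to open $\dbCom^{(j)}$ at $i$ for $j\in\expList\cap[E+1,K-1]$, for $j=K$, and (through $\VerifyPK$) for $j=E$. For every other epoch you have a bare group element, which by itself does not determine an entry at $i$. So $j^*$ is undefined, and $\advB$ has nothing to hand to $\gameSingleRound$, which requires valid openings of both consecutive database commitments. The same goes for the key opening of $\keyCom^{(j^*-1)}$ that $\gameSingleRound$ asks for. Binding cannot fill this, since there is no opening to bind. The paper closes the gap with the algebraic group model. From $\advA$'s representation of $\dbCom^{(K-1)}$, the reduction reads off the committed polynomial and its value $y$ at $\omega^i$, and can compute a matching AMT opening itself; only then does the induction hypothesis pin $y$. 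You need the same extraction. With it, $\advB$ can compute $j^*$ after $\advA$ halts, so guessing $j^*$ and $K$ becomes unnecessary.

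Two smaller points. First, $\gameSingleRound$ already lets the adversary output the auditor's starting state, so an adversarially built $\sigCom$ is not the obstacle. The real mismatch is that $\gameSingleRound$ uses one $\keyCom$ for both $\VerifyDB$ calls, whereas epochs $j^*-1$ and $j^*$ are verified against $\keyCom^{(j^*-2)}$ and $\keyCom^{(j^*-1)}$. This is harmless for $j^*\ge E+2$, since only slot $i$ enters the argument. For $j^*=E+1$, however, slot $i$ of $\keyCom^{(E-1)}$ need not hold $\sk_i$, so your generalized game as stated does not cover that case. The paper treats $E=K-1$ separately: the zero entry forces $\sigPoly(\omega^i)=0$ in the auditor state regardless of that key slot. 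Second, no random-oracle programming is needed. $\advB$ simply forwards all hash and signing queries, including those for epoch $j^*$, to its own oracles.
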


Finally, we argue that the \ac{prim} scheme satisfies the data privacy definition by leveraging the fact that each user $i$'s balance updates $\delta_i$ includes a re-randomization factor $\epsilon_i$.
Since every public value that depends on $\delta_i$ is blinded by $\epsilon_i$, we show that there is no probabilistic polynomial-time adversary that can distinguish with a non-negligible probability between a genuine provider and a simulator in the sense of \cref{def:dbPriv} that replaces honest users' values with random values.
For the full proof, see \cref{sec:proofDbPriv}.

\begin{prop}[Data Privacy]\label{prop:dbPriv}
The \ac{prim} in \cref{fig:prim} has data privacy as per \cref{def:dbPriv}.
\end{prop}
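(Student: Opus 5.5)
We build a simulator $\Sim$ that runs the honest provider's code on dummy data for honest users, and then argue that the real and simulated transcripts are identically distributed except with negligible probability. The simulator is as follows.

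\begin{itemize}
\item $\Sim_1$ runs $\Setup$ but keeps the trapdoor $\tau$ and $\theta=\log_g(h)$.
\item $\Sim_2$ and $\Sim_3$ behave exactly like $\AddPK$ and $\KeyGen$. For honest users, $\Sim_3$ generates the key itself, so it knows $\sk_i$ and $\pk_i^*$ is distributed exactly as in the real flow.
\item $\Sim_4$ runs $\UpdateDB$ on the adversarial update, since that update is public to $\advA$ anyway.
\item $\Sim_5(i)$ does not receive $\delta_i$. It substitutes $\delta_i'=0$ (or a uniform value), samples a fresh $\epsilon_i'$, signs $(E,0)$ with $\sk_i$, and calls $\UpdateDB(i,0,(\sig_i',\epsilon_i'))$.
\item $\Sim_6$ runs $\Publish$ on this simulated state.
\end{itemize}

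Public keys, helper values, $\keyCom$, the key proof trees and the append proofs do not depend on any $\delta_i$, so they are identical in both worlds. Likewise $I$, the binary polynomial $\binPoly$, $\binQuoCom$, $\apk_E$, $\apkRemCom$, $\apkRemDegCom$ and $\apkQuoCom$ depend only on \emph{which} users updated, not on their values, so they are identical too. The only remaining outputs are $\dbCom^*$, the lookup proofs $\dbProof_i$, $\sigCom_E$, $\sig_E$ and $\zeroQuoCom$.

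The key observation is that every one of these is a fixed linear function of the pairs $(\delta_i,\epsilon_i)$ through the combinations $\delta_i\cdot g+\epsilon_i\cdot h$ (respectively in $\hat\GG$), possibly scaled by public or $\sk_i$-dependent constants:
\begin{itemize}
\item $\dbCom^*$ and each AMT node add $\lag_i(\tau)(\delta_i\hat g+\epsilon_i\hat h)$ or $L_{i,j}$ times the analogous quantity;
\item $\sigCom_E$ and $\sigCom$ add $\sk_i\lag_i(\tau)(\delta_i g+\epsilon_i h)$;
\item $\zeroQuoCom$ adds $\delta_i\zeroQuoCom_i+\epsilon_i\widetilde{\zeroQuoCom_i}$, which equals $\zeroQuoCom_i$-type coefficients applied to $\delta_i+\theta\epsilon_i$, since $\widetilde{\zeroQuoCom_i}$ is $\theta$ times the $g$-based helper;
\item $\sig_E$ adds $\sk_i\lag_i(\tau)(\delta_i\hat g+\epsilon_i\hat h)$ plus the $\hash(E)$ term.
\end{itemize}
Hence the whole transcript depends on honest user $i$'s update only through the scalar $\delta_i+\theta\epsilon_i$ (and on the running mask $\maskPoly_i$ through $v_i+\theta\maskPoly_i$). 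Because $\epsilon_i$ is uniform in $\FF$ and $\theta\neq0$, this scalar is uniform and independent of $\delta_i$. So replacing $\delta_i$ by $0$ with a fresh uniform $\epsilon_i'$ gives exactly the same distribution. This holds jointly over all honest updates, because each uses an independent $\epsilon$. We argue it formally by a hybrid over honest updates, or more directly by exhibiting, for each fixed $\tau$ and $\theta$, a bijection $\epsilon_i\mapsto\epsilon_i'=\epsilon_i+\delta_i/\theta$ that maps the real transcript onto the simulated one.

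The main obstacle is the lookup proofs handed to $\advA$ for corrupted indices, and the $\epsilon$ values. The adversary never sees honest users' $\epsilon_i$ or $\maskPoly_i$, since those are returned only to the honest user. We must check that no output leaks $\epsilon_i$ separately from $\delta_i$. The candidate leak is the $\hash(E)$ component of $\sig_E$, but it depends only on $\apk_E$, which is already public. Proofs $\dbProof_j$ for corrupted $j$ do involve honest entries through the AMT quotients, but again only through the combinations $\delta_i\hat g+\epsilon_i\hat h$. Finally, an epoch may contain adversarial updates while $\sig_E$ aggregates honest signatures. We handle this by noting that $\Sim$ knows $\sk_i$, so it can produce correctly distributed $\sig_i'$ for the dummy update with $\sig_E$ consistent. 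The two experiments are then perfectly indistinguishable, which gives data privacy as in \cref{def:dbPriv}.
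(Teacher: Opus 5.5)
Your proof is correct and uses the same simulator as the paper: run the honest provider's code, and replace each honest update by a dummy $\delta_i$ signed together with a fresh $\epsilon_i$. Your justification is tighter than the paper's, which only argues that each output element touched by $\delta_i$ is individually uniform; you show the whole transcript factors through $\delta_i+\theta\epsilon_i$, so the bijection $\epsilon_i\mapsto\epsilon_i+\delta_i/\theta$ gives equality of the joint distribution, which is what indistinguishability requires (this factoring needs $\widetilde{\zeroQuoCom_{j,i}}=\theta\cdot\zeroQuoCom_{j,i}$ for adversarial $j$ too, which \VerifyHelpers{} must enforce).
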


\begin{figure*}[!t]
\centering
\begin{pchstack}[space=0.35\baselineskip]
\begin{pchstack}[boxed,space=0.25\baselineskip]
\procedure[linenumbering,skipfirstln,bodylinesep=\belowLine]{$\UpdateDB(i, \delta_i, (\sig_i, \epsilon_i))$}{
    \pcCommentLine{omitted: steps from \primUpdateDB{}} \\ \totalCom = \totalCom + \delta_i \cdot \hat{g} + \epsilon_i \cdot \hat{h} \\ \polQuoCom = \polQuoCom + \delta_i \cdot \left[\frac{\lag_i(\tau)-1/n}{\tau} \cdot g\right]
}
\procedure[linenumbering,skipfirstln,lnstart=2,bodylinesep=\belowLine]{$\EndEpoch$}{
    \pcCommentLine{omitted: steps from \primEndEpoch{}} \\
    \rangeProof = \RangeProve(\dbCom^*) \\
    \sumProof = \left(\totalCom, \polQuoCom\right) \\
    \pcreturn (\dbProof, \rangeProof, \sumProof) \text{ in place of } \dbProof
}
\end{pchstack}
\begin{pchstack}[boxed,space=0.25\baselineskip]
\procedure[linenumbering,skipfirstln,lnstart=5,bodylinesep=\belowLine]{$\VerifyDB(\keyCom, E, \dbCom^*, (\dbProof, \rangeProof, \sumProof); \sigCom)$}{
    \pcCommentLine{omitted: steps from \primVerifyDB{}} \\
    \pccheck \RangeVerify(\dbCom^*, \rangeProof) \\
    \pcparse \sumProof \text{ as } \left( \totalCom, \polQuoCom \right) \\
    \pccheck e\left(g , \dbCom^* - \totalCom/n \right) = e\left( \polQuoCom, \tau \cdot \hat{g} \right)
}
\end{pchstack}
\end{pchstack}
\vspace{-0.5 em}
\caption{Extending the \ac{prim} from \cref{fig:prim} to a \ac{model}.
\label{fig:ppolShort}}
\end{figure*}

\section{PPoL Design \& Evaluation}\label{sec:pol}

\subsection{\ac{model} Protocol Details}\label{sec:ppolConstruct}

As per \cref{sec:newModel}, we can now modularly build a \ac{model} on top of the \ac{prim} from \cref{sec:primConstruct} by integrating a summation proof and a range proof.
Here, the sum proof authenticates a commitment to a claimed sum of total liabilities, while the range proof ensures that no numerical overflows are possible during the summation.
We summarize the required modifications (\cref{fig:ppolShort}) and, due to space constraints, defer further details to \cref{sec:ppolDetails}.

The main idea for the summation proof is to use a KZG opening \cite{kzg} of $\dbCom$ at 0 to authenticate the value committed in a group element $\Total$ as $\total = \sum_{i \in [0,n-1]} \dbPoly(\omega^i)$ \cite{aurora} \cite[][Fact 1]{vampire}.
The proof is a commitment to $\polQuo$ with $\dbPoly - \total/n = \polQuo \cdot x$.
Importantly, a change $\delta_i$ to a single entry of $\dbPoly$ only requires updating $\polQuo$ with $\delta_i \cdot \frac{\lag_i - 1/n}{x}$.
For the range proof---black-boxed as $(\RangeProve, \RangeVerify)$ in \cref{fig:ppolShort}---we resort to a standard idea of expressing $\dbPoly$ bit by bit \cite{provisions}.
Here, it is assumed that balance values are always below a power of two and at most $|\FF|/n$ (where $|\FF| \approx 2^{255}$ in the KZG setting).
Let $m \leq \lfloor\log(|\FF|/n)\rfloor$.
Then, the range proof consists of commitments to $m$ binary-valued polynomials $\polBin_0, \dots, \polBin_{m-1}$, with accompanying binarity proofs, such that $\sum_{i \in [0,m-1]} \polBin_i \cdot 2^i = \dbPoly$.
That is, $\polBin_i$ interpolates the $i$th bits of the values in $\dbPoly$.
The binarity proofs are made zero-knowledge like in \cite[Appendix B]{libert23}.
Namely, the commitments are blinded by random multiples of the vanishing polynomial of $\domain$, yielding $\polBinCom_i = (\polBin_i(\tau) + \mu_i \cdot (\tau^n-1)) \cdot g$.
Binarity proofs are then constructed as per \cref{sec:quotients}, however, taking the randomization into account.
Finally, we batch the $m$ binarity proofs via random linear combination (see \cref{sec:ppolDetails}).
Note that since \emph{only} users can randomize their proof tree paths and some paths overlap, it is possible for some users to infer others' activity pattern.
A simple mitigation is to assign each user a sibling account, controlled by the provider or another semi-trusted party, which performs regular re-randomizations.
In this way, shared path segments between siblings obscure user activity without affecting balances or reducing liabilities.
Additionally, since the APK proof’s commitments ($\binCom$, $\binQuoCom$) leak activity, we apply the same zero-knowledge masking used in the range proof’s binarity checks (\cref{sec:ppolDetails}).

\begin{figure*}[t]
\centering
\includegraphics[width=0.75\textwidth]{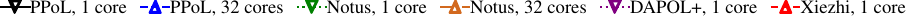}
\vspace{0.4em}

\subcaptionbox{Global proof creation.\label{fig:evalGlobalProver}}{
\includegraphics[width=0.26\textwidth]{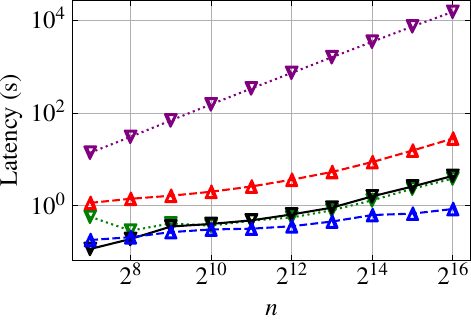}
\vspace{-0.5em}
}
\subcaptionbox{Inclusion proof creation.\label{fig:evalSingleProver}}{
\includegraphics[width=0.26\textwidth]{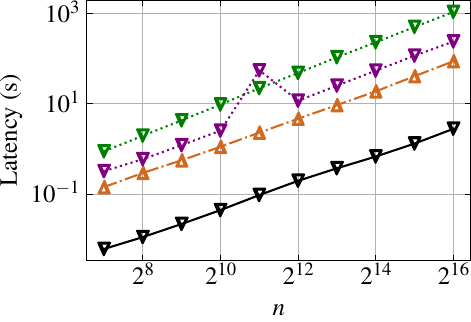}
\vspace{-0.5em}
}
\subcaptionbox{Global proof verification.\label{fig:evalGlobalVerify}}{
\includegraphics[width=0.26\textwidth]{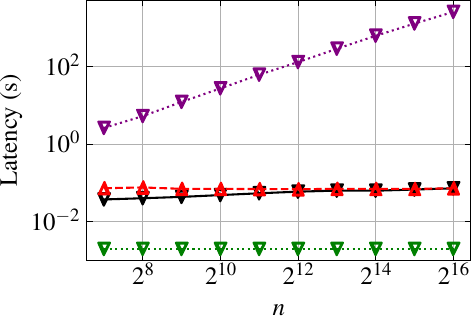}
\vspace{-0.5em}
}
\vspace{0.4em}

\subcaptionbox{Inclusion verification.\label{fig:evalSingleVerify}}{
\includegraphics[width=0.26\textwidth]{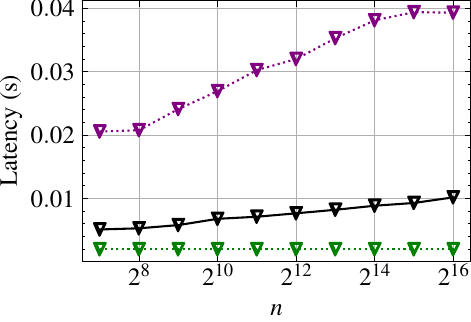}
\vspace{-0.5em}
}
\subcaptionbox{Provider throughput, 1 core.\label{fig:evalThroughput1}}{
\includegraphics[width=0.26\textwidth]{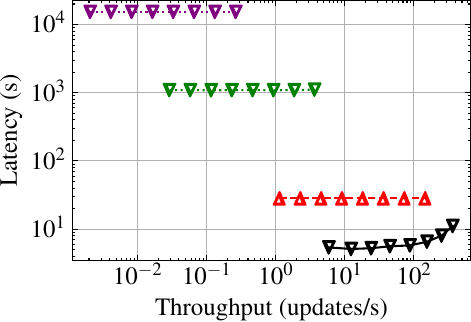} \vspace{-0.5em}
}
\subcaptionbox{Throughput, 32 cores.\label{fig:evalThroughput2}}{
\includegraphics[width=0.26\textwidth]{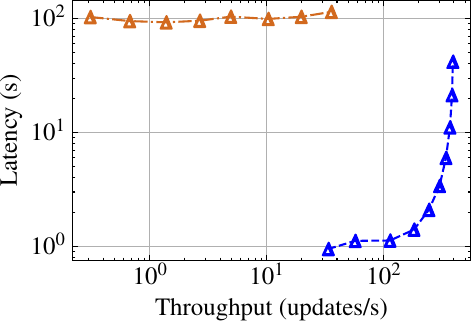} \vspace{-0.5em}
}
\vspace{-0.5 em}
\caption{Results.
In Figs. (a)--(d), if 1-core and 32-core values were equal, only one is shown.
In Figs. (e) and (f), $n = 2^{16}$.
Note: as Xiezhi's code base does not implement inclusion proofs, the measured performance is inherently overly optimistic.
\label{fig:eval}}
\end{figure*}

\subsection{Evaluation}\label{sec:eval}

We implemented our \ac{model} in Go, using gnark-crypto \cite{gnarkCrypto} for elliptic curve and finite field operations.\footnote{Our code is publicly available at~\cite{githubRepo}.}  We chose the curve BN254 and used bit width $m=64$ for the \ac{model}'s range proofs.

\subheading{Methodology.}
We first evaluate our implementation on \ac{model}-specific criteria and then report on two sets of experiments that compare our implementation against state-of-the-art PoLs.
We compare against DAPOL+ \cite{dapolplus}, Notus \cite{notus} and the Xiezhi scheme from a recent preprint \cite{xiezhi}, which also appears to have state-of-the-art performance.
All experiments are performed on a dual-64-core AMD EPYC 7742 machine with 1024 GB of RAM. Unless stated otherwise, the experiments ran single-threaded, were repeated 5 times and averaged.
For DAPOL+ and Xiezhi, no multi-threaded code was available.

The \emph{\ac{model}-specific experiments} measure the overhead for user registration.

Then, we start by \textbf{varying the total number of users $n$} while keeping the fraction of users who update within an epoch constant.
Similar to \cite[cf.][Section 6.3]{notus}, this fraction is set to $2^{-6}$ so that $|I| = n \cdot 2^{-6}$. Since most schemes allow separating the prover's work into a component for computing users' individual proofs and another component for the global auditor proofs, we evaluate these tasks separately.
In a real execution, the total proving time would be the sum of both tasks' timings.
Recall from \cref{sec:existing} that DAPOL+ is a PoL without a succinctly verifiable sum.
Therefore, the only way for an auditor to check a committed sum of liabilities against the database entries is iterating over all users' individual proofs.
This is why the DAPOL+ \enquote{global proof} creation and verification timings will be given as the sum of all $n$ individual users' timings.
We note that Xiezhi's code does not implement inclusion proofs.

Next, we vary \textbf{the number of users $I$ who update their balance}, while keeping $n$ constant. Here, we are interested in the latency and throughput witnessed by the server.
Latency means the time needed to compute all proofs published at the end of an epoch (lower bounding the epoch length).
Throughput is the number of balance updates the server can perform per unit of proving time.

\subheading{Evaluation results.}
In \ac{model},
upon registration, each user has a one-time $O(n)$ key generation and the provider must update $O(n)$ helper values. We find that, with $n=2^{16}$ users, this amounts to 28 seconds of local computation for each user and 420 (18) milliseconds per user for the single-threaded (32-core) provider.

The microbenchmarks for various values of $n$ are shown in \cref{fig:evalGlobalProver,fig:evalSingleProver,fig:evalGlobalVerify,fig:evalSingleVerify}.
We find that the \emph{global proof} creation takes comparable amounts of time (4 seconds at $2^{16}$) in single-threaded \ac{model}, single-threaded Notus and multi-threaded Notus.
Multi-threaded \ac{model} appears to benefit heavily from parallelism so that the timing is only 1 second at $n=2^{16}$.
This is likely due to our construct being bottlenecked only by size-$n$ multi-scalar multiplications and FFTs, which can be parallelized.
As shown in \cref{fig:evalGlobalProver}, both DAPOL+ and Xiezhi perform worse by up to an order of magnitude.
The creation of \emph{per-user inclusion proofs} takes 3.7 seconds for single-threaded and multi-threaded \ac{model}.
In contrast, single-threaded and multi-threaded Notus and DAPOL+ require 88, 1000 and 240 seconds, respectively, for $n=2^{16}$.
This behavior is likely due to both Notus and DAPOL+ having to recompute \emph{all} $n$ users' proofs from scratch while \ac{model} performs $O(\log(n))$-complexity tree updates \emph{only for those users who actually update}.
The verification timings of \ac{model}, Xiezhi and Notus are generally below one second, both for users' inclusion proofs and for the auditors' global proofs.
While in Notus, both verifications correspond to a Groth16 SNARK (2 milliseconds, irrespective of $n$), our \ac{model}'s inclusion proofs took 10 milliseconds and its and Xiezhi's global proofs took 70 milliseconds to verify at $n=2^{16}$.
We expect these differences to be insignificant in most settings.
The DAPOL+ global proof verifier is an outlier, at 40 minutes, as it is linear in $n$.

Provider-side latency and throughput for varying $|I|$ are shown in \cref{fig:evalThroughput1,fig:evalThroughput2}.
Our \ac{model}'s latency stays below 10 seconds even for as many as 300 balance updates per second, single-threaded.
Xiezhi's throughput seems similar (30 seconds of latency at 180 updates per second), however, the numbers are only a \emph{lower bound} as the latest Xiezhi code did not provide inclusion proofs.
For single-threaded Notus and DAPOL+, latency is one and, respectively, two orders of magnitude higher at throughputs which are one and, respectively, two orders of magnitude lower than PPoL.
For Notus, multi-threading improves latency---as expected from \cref{fig:evalSingleProver}---but not below 100 seconds at 35 updates per second.
The multi-threaded \ac{model} resembles the single-threaded case since for greater $|I|$, the less parallelizable per-update work dominates the global work.

\begin{rem}[Sharding]
In practice, PoLs use sharded architectures where users are divided into subsets and one PoL instance is run per subset \cite{binanceImpl,notus}.
We infer that, e.g., using our experiments' $n=2^{16}$ setup on 4096 machines handles $2^{28}$ users with the same relative latency and throughput gains over our baselines.
\end{rem}

\section{Conclusion}
In this work, we introduced a novel PoL model called Permissioned PoL (PPoL) that---unlike existing approaches---eliminates the need for user cooperation in detecting potential misbehavior by a malicious provider. Our model inherently prevents the provider from tampering with users' balance values and removes the burden on users to continuously verify their balances for potential manipulation.
We implemented our proposals and evaluated their performance when compared to state-of-the-art PoL proposals. Our evaluation results indicate that the strong security guarantees of our \ac{model} do not come with a prohibitive performance penalty.

\begin{acks}
The authors would like to thank Kyle Rudnick and Julian Willingmann for help in gathering data for \cref{fig:workload} and in performing the evaluation.
This work has been funded by the Deutsche Forschungsgemeinschaft (DFG, German Research Foundation) under Germany’s Excellence Strategy - EXC 2092 CASA - 390781972.
\end{acks}

\printbibliography

\appendix

\section{Proofs}\label{sec:proofs}
\subsection{Preliminaries}\label{sec:proofsPrelims}
\subheading{Hardness assumptions.}
Let $\FF$ be a prime-order finite field, $\GG, \hat{\GG}, \GG_T$ be groups of order $|\FF|$ and let $e : \GG \times \hat{\GG} \rightarrow \GG_T$ be a bilinear map.
Let $\secpar$ be a security parameter. Then, under the co-computational Diffie-Hellman (co-CDH) assumption, for every probabilistic polynomial-time algorithm, the probability to output $\beta \cdot \beta' \cdot \hat{g}$ when given $g, \beta \cdot g \in \GG, \hat{g}, \beta' \cdot \hat{g}, \beta \cdot \hat{g}$ as input is bounded by a negligible function in $\secpar$.
Further, under the $n$-Strong Diffie-Hellman ($n$-SDH) assumption, for every probabilistic polynomial-time algorithm, the probability to output $(\beta'', (\tau + \beta'')^{-1} \cdot g)$ when given $\left(\tau^i \cdot g, \tau^i \cdot \hat{g}\right)_{i \in [0,n]}$ as input is bounded by a negligible function in $\secpar$.

\subheading{Algebraic group model.}
Our security arguments rely on the algebraic group model (AGM) from \cite{agm} as an underlying paradigm.
In this model, it is presumed that from any algorithm that outputs elements in cryptographic groups, it is possible to extract representations of these group elements in terms of other known group elements that were given to the algorithm previously (like public generators).

We also rely on the following proposition which follows from both Boneh et al. \cite{homBLS} and from Aranha and Pagnin \cite{ap19}.
Indeed, it is a (one-dimensional, single-signer, trivial) special case of both Boneh et al.'s and Aranha and Pagnin's general (multi-dimensional) schemes' security theorems.
Because the proposition is a straightforward corollary of \cite[Theorem 6]{homBLS} or \cite[Theorem 1]{ap19}, we omit a proof here and refer to the cited works for details (with a warning about differing notations).
The claim can also be seen as a slight extension of the standard security statement for BLS signatures \cite{bls04}, so \cite{bls04} or \cite[Theorem 6.1]{agm} may also serve as references for proof.

\subheading{\cref{prop:uf}.}
In the following security experiment $\gameUf$, every probabilistic polynomial-time adversary has at most negligible success probability under the co-CDH assumption, modeling $\groupHash : \{0,1\}^* \rightarrow \hat{\GG}$ as a random oracle:
\begin{enumerate}
\item A prime-order finite field $\FF$, groups $\GG, \hat{\GG}, \GG_T$ of order $|\FF|$, a bilinear map $e : \GG \times \hat{\GG} \rightarrow \GG_T$, fixed elements $g \in \GG, \hat{g} \in \hat{\GG}$ and $\beta \in \FF$ are chosen.
\item $\pk = \beta \cdot g$ and the above parameters, except for $\beta$, are sent to the adversary.
\item The adversary gets oracle access to $\groupHash$.
\item The adversary gets access to a signing oracle that, on input $(j, \delta^{(j)})$, returns $\sig = \beta \cdot (\groupHash(j) + \delta^{(j)} \cdot \hat{g})$---but at most one response per $j$.
\item Finally, the adversary outputs $E \in \{0,1\}^*, z\in \FF, \sig \in \hat{\GG}$.
\item The adversary wins if
\begin{itemize}
\item $e(\sig, \hat{g}) = e(\pk, \groupHash(E) + z \cdot \hat{g})$ and
\item either
    \begin{itemize}
        \item no signing query $(E, \cdot)$ was made or
        \item $z \neq \delta^{(E)}$.
    \end{itemize}
\end{itemize}
\end{enumerate}

\subsection{Proposition \ref{prop:linHom}}\label{sec:proofSingleRound}

\Cref{prop:linHom} is a stepping stone towards the main data security theorem.
To state \cref{prop:linHom} precisely, we first define an auxiliary security game $\gameSingleRound{}$ for \ac{prim} schemes.
Like in the data security game ($\gameDbSec$), the attacker's goal is to corrupt an honest user's value.
But here, only a single epoch change is considered, the table of keys is assumed to be static.
$\gameSingleRound{}$ proceeds as follows:
\begin{enumerate}
\item Public parameters \pp{} and a key pair $\sk_i, \pk_i$ are generated by $\Setup(\secpar, n)$ and, respectively, $\KeyGen(i)$.
\item \pp{} and $\pk_i$ are sent to the adversary.
\item The adversary gets access to a \Sign{} oracle for $\sk_i$.
On input $(j, \delta^{(j)})$, the oracle outputs $\sig^{(j)}$ as computed by $\Sign(\sk_i, j, \delta^{(j)})$.
However, if previously there already was another query of the form $(j, \cdot)$, then the same output as before is returned.
\item Eventually, the adversary outputs $E,\allowbreak{} \keyCom,\allowbreak{} \dbCom,\allowbreak{} \dbProof,\allowbreak{} \dbCom',\allowbreak{} \dbProof',\allowbreak{} \keyProof_i,\allowbreak{} \dbProof_i,\allowbreak{} \dbProof_i',\allowbreak{} z,\allowbreak{} z'$ and a starting state $\aux$ for the auditor.
\item The adversary wins if:
\begin{itemize}
\item $\VerifyPK(\keyCom, i, \pk_i, \keyProof_i) = 1$.
\item $\VerifyDB(\keyCom, E-1, \dbCom, \dbProof; \aux) = 1$ where, during execution, $\aux$ is updated to $\aux'$.
\item $\VerifyDB(\keyCom, E, \dbCom', \dbProof'; \aux') = 1$.
\item $\VerifyLU(\dbCom, i, z, \dbProof_i) = 1$.
\item $\VerifyLU(\dbCom', i, z', \dbProof_i') = 1$.
\item $z'-z \neq \delta^{(E)}$ where, if no signing query of the form $(E, \cdot)$ was made, $\delta^{(E)}=0$.
\end{itemize}
\end{enumerate}

\subheading{Proof.}
We prove that our \ac{prim} scheme (\cref{fig:prim}) ensures that, for any $n$ and $i$, a malicious provider has at most a negligible chance of winning $\gameSingleRound{}$.
Suppose there was an attacker $\advA{}$ with non-negligible success probability.
In the AGM under the $n$-SDH assumption and when $\groupHash$ is modeled as a random oracle, it would then be possible to construct an attacker $\advB{}$ against $\gameUf{}$ (\cref{sec:proofsPrelims}) with a similar success probability---in contradiction of \cref{prop:uf}.

To begin, $\advB{}$ is given the description of $\GG, \hat{\GG}, \GG_T$ and $\beta \cdot g$.
In order to simulate the game $\gameSingleRound{}$ for $\advA$, $\advB$ then runs \Setup{} and remembers the trapdoor $\tau$ as well as $\eta$ such that $h = \eta \cdot g$.
The public key $\pk_i^*$ for user $i$ is generated with $\beta \cdot g$ taking the place of $\pk_i$.
The remaining components for $\pk_i^*$ are correctly computed using knowledge of $\tau$, e.g., $\zeroQuoCom_{i,j} = \frac{\lag_i(\tau) \cdot \lag_j(\tau)}{\tau^n-1} \cdot [\beta \cdot g]$.
After $\pp$ and $\pk_i^*$ are sent to $\advA$, signing queries and random oracle queries are simulated as follows.
When $\advA$ makes a random oracle query for $j$, $\advB$ makes the same query to $\groupHash$ and returns the result.
When $\advA$ makes a signing query for $\left(j, (\delta^{(j)}, \epsilon^{(j)})\right)$, $\advB$ makes a signing query for $\left(j, (\delta^{(j)} + \eta \cdot \epsilon^{(j)}) \cdot \lag_i(\tau)\right)$ and returns the result to $\advA$.
Note that when $\advA$ outputs, $z$ and $z'$ will both come with two components (the second one of which is related to our scheme's randomization), $z=(z_1, z_2), z'=(z_1',z_2')$.
Simplifying notation, let us re-define $z = z_1 + \eta \cdot z_2$ and $z' = z_1' + \eta \cdot z_2'$.
Now, after $\advA$ outputs and the checks pass, we can claim (always except for negligible errors):
\begin{itemize}
\item By the soundness of the AMT verification identity (from \AMTVerify{} during \VerifyPK{}), $\keyCom = \keyPoly(\tau) \cdot g$ for $\keyPoly(x)$ with $\keyPoly(\omega^i) = \beta$.
\item By the soundness of the AMT verification identity (from \AMTVerify{} during \VerifyLU{}), $\dbCom = \dbPoly(\tau) \cdot \hat{g}$ for $\dbPoly(x)$ with $\dbPoly(\omega^i) = z$.
\item By the soundness of the AMT verification identity (from \AMTVerify{} during \VerifyLU{}), $\dbCom' = \dbPoly'(\tau) \cdot \hat{g}$ for $\dbPoly'(x)$ with $\dbPoly'(\omega^i) = z'$.
\item By the successful zerocheck\footnote{In the AGM, the adversary outputting $\zeroQuoCom$ such that $e(\keyCom, \dbCom) = e(\sigCom, \hat{g}) + e(\zeroQuoCom, \tau^n \cdot \hat{g} - \hat{g})$ implies the extractability of a quotient polynomial $\zeroQuo$ that satisfies the zerocheck identity from \cref{sec:quotients}. For a proof from $n$-SDH of the analogous statement about the sumcheck identity, see, for example, \cite[Lemma 5.6]{das23}.\label{ft:zerocheck}} during the first execution of \VerifyDB{}, $\sigCom := \aux + \sigCom_{E-1}$ (where $\sigCom_{E-1}$ is part of $\dbProof$) commits to $\sigPoly(x)$ with $\sigPoly(\omega^i) = \beta \cdot z$.
Further, $\aux' = \sigCom$.
\item By the successful zerocheck during the second execution of \VerifyDB{}, $\sigCom' := \sigCom + \sigCom_E$ (where $\sigCom_{E}$ is part of $\dbProof'$) commits to $\sigPoly'(x)$ with $\sigPoly'(\omega^i) = \beta \cdot z'$.
\end{itemize}

With the previous claims, by additive homomorphism, we have that $\sigCom_E = \sigCom' - \sigCom = \sigPoly_E(\tau) \cdot g$ for $\sigPoly_E(x)$ with
\begin{align}
\sigPoly_E(\omega^i) &= \beta \cdot (z'-z) \nonumber\\
&= \beta \cdot ((z'_1 - z_1) + \eta \cdot (z'_2-z_2)). \label{eq:fe}
\end{align}
Further:
\begin{itemize}
\item By the soundness of the APK verification identities (from \APKVerify{}), $\apk_E = \sum_{j \in I} \keyPoly(\omega^j)$ (where $\apk_E$ is part of $\dbProof$ and $I \subseteq [0,n-1], i \in I$).
\item By the signature check during the second execution of \VerifyDB{},
\begin{align}\label{eq:sig}
\sig_E = \log_g(\apk_E) \cdot \groupHash(E) + \sigPoly_E(\tau) \cdot \hat{g}.
\end{align}
\end{itemize}

Next, $\advB$ uses the fact that $\advA$ is algebraic to express $\sigPoly_E$ and $\keyCom$ as combinations of other known group elements.
Known relations between group elements allow simplifying the resulting expressions.
In particular, we collect all terms involving $\beta$ and separate them from those without $\beta$.
$\advB$ thus obtains coefficients of $\gamma(x), \zeta(x), \theta(x), \kappa(x)$ such that:
\begin{align*}
\sigCom_E &= \sigPoly_E(\tau) \cdot g = (\gamma(\tau) + \beta \cdot \zeta(\tau)) \cdot {g} \\
\keyCom &= \keyPoly(\tau) \cdot g = (\theta(\tau) + \beta \cdot \kappa(\tau)) \cdot g.
\end{align*}
This also means, except for negligible probability over the uniform choice of $\tau \in \FF$, that
\begin{align*}
\sigPoly_E = \gamma + \beta \cdot \zeta \qquad
\keyPoly = \theta + \beta \cdot \kappa
\end{align*}
as polynomials and, due to \cref{eq:fe},
\begin{align*}
\gamma(\omega^i) + \beta \cdot \zeta(\omega^i) &= \beta \cdot (z'-z).
\end{align*}
Since we collected all terms involving $\beta$ (and therefore all involving $\beta \cdot \lag_i(\tau)$) into $\zeta$, we can assume $\gamma(\omega^i) = 0$ and $\zeta(\omega^i) = z'-z$.
We decompose $\zeta$ as
\begin{align*}
\zeta(x) = \zeta^*(x) + (z'-z) \cdot \lag_i(x)
\end{align*}
where $\zeta^*(\omega^i) = 0$.
So far, \cref{eq:sig} can be rewritten as
\begin{align}\label{eq:sig1}
\begin{split}
\sig_E &= \left(\sum_{j \in I} \theta(\omega^j) + \beta \cdot \kappa(\omega^j)\right) \cdot \groupHash(E) + (\gamma(\tau) + \beta \cdot \zeta(\tau)) \cdot \hat{g}.
\end{split}
\end{align}
Let $\sig$ be the queried signature for $\left(E, (\delta^{(E)}, \epsilon^{(E)})\right)$ such that $(z_1'-z_1, z_2'-z_2) \neq (\delta^{(E)}, \epsilon^{(E)})$.
Let $\delta = \delta^{(E)} + \eta \cdot \epsilon^{(E)}$.
Thus,
\begin{align*}
\sig = \beta \cdot (\groupHash(E) + \delta \cdot \lag_i(\tau) \cdot \hat{g}).
\end{align*}

We next differentiate two cases and, in each one, show how to use $\sig$ in order to extract valid signatures for epoch $E$ and different messages than were queried---forgeries in the sense of $\gameUf{}$, contradicting \cref{prop:uf}.
$\advB$'s knowledge of $\tau$ ensures that $\advB$ can always compute the corresponding messages for which the forgeries are valid.
\begin{enumerate}
\item If $\sum_{j \in I} \kappa(\omega^j) \neq 0$:
consider \cref{eq:sig1} and subtract \[\left(\sum_{j \in I} \theta(\omega^j)\right) \cdot \groupHash(E) + \gamma(\tau) \cdot \hat{g}\] from both sides.
Divide by $\sum_{j \in I} \kappa(\omega^j)$.
The result is \[\sig' = \beta \cdot \left(\groupHash(E) + \frac{\zeta(\tau)}{\sum_{j \in I} \kappa(\omega^j)} \cdot \hat{g}\right).\]
When assuming $\zeta(\tau) \neq 0$, combining $\sig$ and $\sig'$, e.g., as $\mu \cdot \sig + (1-\mu) \cdot \sig'$ for any $\mu$, allows forging arbitrary signatures. Should, on the other hand, $\zeta(\tau) = 0$, $\advB$ can use $\sig + \mu \cdot (\sig-\sig')$ as a forgery.
\item If $\sum_{j \in I} \kappa(\omega^j) = 0$:
In this case, subtracting \[\left(\sum_{j \in I} \theta(\omega^j)\right) \cdot \groupHash(E) + \gamma(\tau) \cdot \hat{g}\] from both sides of \cref{eq:sig1} yields \[\sig' = \beta \cdot (\zeta^*(\tau) + (z'-z) \cdot \lag_i(\tau)) \cdot \hat{g}.\]
Arbitrary signatures can now be forged by combining $\sig$ and $\sig'$, for example, as $\sig - \sig'$.
Here, note that $\delta \neq z'-z$ helps to rule out pathological cases where forgeries could otherwise not be computed.
\end{enumerate}

As the success of $\advB$ in $\gameUf{}$ is the success probability of $\advA$ in $\gameSingleRound{}$ conditioned on events with overwhelming probabilities, $\advB$ has non-negligible success probability.

\subsection{\cref{prop:dbSec}}\label{sec:proofDbSec}

We show that the \ac{prim} scheme from \cref{fig:prim} is secure.
We proceed by induction over $K$.
The base case $K=1$ follows essentially the same strategy as the proof of \cref{prop:singleRound}.
The induction step uses the algebraic group model, \cref{prop:appendOnly} and the induction hypothesis and then also uses the steps of our proof of \cref{prop:singleRound}.

\subheading{Base case, $K=1$.}
Assuming an attacker $\advA$ against data security for $K=1$, we construct an attacker $\advB$ against $\gameUf$ (\cref{sec:proofsPrelims}) with similar success probability.
$\advB$ starts with $\pk$ and simulates $\gameDbSec$ for $\advA$ by choosing $n$, running $\Setup$ to generate $\pp$ (remembering the trapdoor $\tau$ and $\eta$ such that $h = \eta \cdot g$), choosing $i \in [n]$, setting $\pk_i = \pk$ and setting $\pk_i^*$ based on $\pk_i$ and knowledge of $\tau$ and $\eta$.
$\advA$'s random oracle requests are passed to the $\groupHash$ oracle.
$\advA$'s signing requests $(j, (\delta^{(j)}, \epsilon^{(j)}))$ are passed to the signing oracle as $\left(j, (\delta^{(j)} + \eta \cdot \epsilon^{(j)}) \cdot \lag_i(\tau)\right)$.
Eventually, $\advA$ outputs $\keyCom^{(0)}, \dbCom^{(0)}, \dbCom^{(1)}, \keyCom^{(1)}, \dbProof^{(1)}, \keyProof^{(1)}, \keyProof_i^{(0)}, \dbProof_i^{(1)}, z$ such that all verification checks pass, but: if a query was made for epoch 1, $z \neq (\delta^{(1)}, \epsilon^{(1)})$, else $z \neq (0,0)$.

From $\VerifyPK\left(\keyCom^{(0)}, i, \pk_i, \keyProof_i^{(0)}\right)$ accepting, we deduce that $\keyCom^{(0)}$ commits to $\keyPoly(x)$ with $\keyPoly(\omega^i) = \beta$.
From $\VerifyLU\left(\dbCom^{(1)}, i, z, \dbProof_i^{(1)}\right)$ accepting, we deduce that $\dbCom^{(1)}$ commits to $\dbPoly(x)$ with $\dbPoly(\omega^i) = z_1 + \eta \cdot z_2$.
From $\VerifyDB\left(\keyCom^{(0)}, 1, \dbCom^{(1)}, \dbProof^{(1)}; 0\right)$ accepting, we deduce that $\sigCom_1 \in \dbProof^{(1)}$ commits to $\sigPoly_1(x)$ with $\sigPoly_1(\omega^i) = \beta \cdot (z_1 + \eta \cdot z_2)$.
(The previous claims all hold except with negligible probability.)
But now we are in the same situation as in the proof of \cref{prop:singleRound}---i.e., we just derived an analogue of \cref{eq:fe}.
Hence, we can follow the same strategy to extract a forgery against $\gameUf$.

\subheading{Induction step, $K-1 \rightarrow K$.}
Suppose there was an attacker $\advA$ that would output a convincing corrupted history of length $K$.
In the algebraic group model, due to our ability to represent $\dbCom^{(K-1)}$ as a combination of $\tau^i$ terms, we can be confident that $\dbCom^{(K-1)}$ is a commitment to \emph{some} polynomial.
Therefore, there must exist a value $y$ and a corresponding proof $\dbProof_i^{(K-1)}$ that the $i$th entry in $\dbCom^{(K-1)}$ is $y$.
Importantly, the ability to represent $\dbCom^{(K-1)}$ also implies the ability to compute such a proof.
First consider the case $E<K-1$.
By the induction hypothesis, $y=\sum_{k \in \expList \cap [E+1,K-2]} \delta^{(k)} + \eta \cdot \epsilon^{(k)}$.
Take this together with the fact that all $\keyCom^{(k)}, k>E$ are successive append-only updates where $\pk_i$ is always contained at position $i$ (due to \VerifyKeys{} as per \cref{prop:appendOnly}).
But then, the reasoning used in the proof of \cref{prop:singleRound} applies to the values $\dbCom^{(K-1)}, \dbProof^{(K-1)}, \dbCom^{(K)}, \dbProof^{(K)}, \dbProof_i^{(K-1)}, \dbProof^{(K)}, y, z$.
This includes, critically, the derivation of \cref{eq:fe} and all subsequent steps which will lead to a forgery for $\gameUf$.

In case $E=K-1$, we use the fact that $\VerifyPK(\keyCom^{(K-1)},\allowbreak{} i,\allowbreak{} \pk_i,\allowbreak{} (\keyProof_i^*,\allowbreak{} \dbProof_i^{(K-1)});\allowbreak{} \dbCom^{(K-1)})\allowbreak{} = 1$ also implies $\VerifyLU(\dbCom^{(K-1)},\allowbreak{} i,\allowbreak{} (0,\allowbreak{}0),\allowbreak{} \dbProof_i^{(K-1)})\allowbreak{} = 1$.
As previously, we can deduce $\keyPoly(\omega^i) = \beta, \dbPoly(\omega^i) = 0, \sigPoly(\omega^i) = 0$ (where $\sigPoly$ is from the auxiliary input to \VerifyDB{} in round $K-1$).
Further, for $\sigPoly_K \in \dbProof^{(K)}$, we have $\sigPoly_K(\omega^i) = \beta \cdot (z_1 + \eta \cdot z_2) \neq \beta \cdot (\delta^{(K)} + \eta \cdot \epsilon^{(k)})$ as per the round-$K$ \VerifyDB{} check.
With the previous equation replacing \cref{eq:fe}, we run through the same reasoning as in the proof of \cref{prop:singleRound} towards a forgery for $\gameUf$.

\subsection{\cref{prop:dbPriv}}\label{sec:proofDbPriv}

We now argue that the use of randomizers within the \ac{prim} from \cref{fig:prim} allows for a straightforward simulator construction.
The simulator runs all of the provider's algorithms normally, however, always replacing honest users' values with dummy inputs:
\begin{itemize}
\item $\Sim_1$ runs \Setup{} and returns $\pp$.
\item $\Sim_2$, on input $i, \pk_i$, runs $\AddPK(i, \pk_i)$.
\item $\Sim_3$, on input $i$, calls $\KeyGen(i)$ to compute $\sk_i, \pk_i$, sets $\dbPoly_i, \maskPoly_i = 0$, calls $\AddPK(i, \pk_i)$ and returns $\pk_i$.
\item $\Sim_4$, on input $i, \delta, \sig$, runs $\UpdateDB(i, \delta, \sig)$.
\item $\Sim_5$, on input $i$, picks $\delta_i$ arbitrarily, computes $(\sig, \epsilon_i) = \Sign(\sk_i, E, \delta_i)$ and calls $\UpdateDB(i, \delta_i, (\sig_i, \epsilon_i))$.
\item $\Sim_6$ runs \EndEpoch{} and returns the output.
\end{itemize}

\begin{figure*}[!t]
\centering
\begin{pcvstack}[space=0.25\baselineskip]
\begin{pchstack}[boxed,space=0.35\baselineskip]
\procedure[linenumbering,skipfirstln,bodylinesep=\belowLine]{$\UpdateDB(i, \delta_i, (\sig_i, \epsilon_i))$}{
    \pcCommentLine{omitted: steps from \primUpdateDB{}} \\
    \dbPoly = \dbPoly + \delta_i \cdot \lag_i(x) \\
    \maskCom = \maskCom + \epsilon_i \cdot [\lag_i(\tau) \cdot \hat{g}] \pcskipln\\
    \pcCommentLine{update sum and sum proof} \\
    \totalCom = \totalCom + \delta_i \cdot \hat{g} + \epsilon_i \cdot \hat{h}  \\
    \polQuoCom = \polQuoCom + \delta_i \cdot \textstyle\left[\frac{\lag_i(\tau)-1/n}{\tau} \cdot g\right]
}
\begin{pchstack}[space=-0.085cm]
    \procedure[linenumbering,skipfirstln,lnstart=4,bodylinesep=\belowLine]{\EndEpoch{}}{
        \pcCommentLine{omitted: steps from \primEndEpoch{}} \\
        \text{find binary-valued } \polBin_0(x), \dots, \polBin_{m-1}(x) \pcskipln\\
        \qquad \text{ with } \textstyle\sum_{i \in [0,m-1]} \polBin_i(x) \cdot 2^i = \dbPoly(x) \\
        \pcfor i \in [0,m-1]: \\
        \quad \text{pick random } \mu_i, \hat{\mu_i} \\
        \quad \polBinCom_i = \mu_i \cdot \left[(\tau^n-1) \cdot h\right] + \polBin_i(\tau) \cdot g \\
        \quad \hat{\polBinCom_i} = \hat{\mu_i} \cdot \left[(\tau^n-1) \cdot \hat{g}\right] + \polBin_i(\tau) \cdot \hat{g} \\
        \gamma_1, \gamma_2 = \fieldHash(\polBinCom_1, \dots, \polBinCom_m) \\
        \label{line:rangeQuoCom} \rangeQuoCom = \textstyle\sum_{i \in [0,m-1]} \gamma_1^i \cdot \left( \frac{\polBin_i(\tau)^2-\polBin_i(\tau)}{\tau^n-1} \cdot g + \mu_i \cdot \polBin_i(\tau) \cdot h \right. \pcskipln\\
        \qquad\qquad\quad \left. + \hat{\mu}_i \cdot \polBin_i(\tau) \cdot g + \hat{\mu}_i \cdot \mu_i \cdot \left[(\tau^n-1) \cdot h\right] - \hat{\mu}_i \cdot g \vphantom{\frac{a}{b}}\right) \phantom{aaa} \pcskipln\\
        \qquad\qquad\quad + \gamma_2^i \cdot (\mu_i \cdot h - \hat{\mu_i} \cdot g) \\
        \text{pick random } \gamma_0
    }
    \procedure[linenumbering,lnstart=12,bodylinesep=\belowLine]{\phantom{abc}}{
        \label{line:rangeRecompStart} \maskCom^* = \maskCom + \gamma_0 \cdot \left[(\tau^n-1) \cdot \hat{g}\right] \\
        \label{line:rangeRecompEnd} M^* = - \left( \gamma_0 + \textstyle\sum_{i \in [0,m-1]} \mu_i \cdot 2^i \right) \cdot h \\
        \rangeProof = \left( \polBinCom_0, \dots, \polBinCom_{m-1}, \maskCom^*, M^*, \rangeQuoCom \right) \\
        \sumProof = \left(\totalCom, \polQuoCom\right) \pcskipln\\
        \pcCommentLine{APK proof changes} \\
        \label{line:apkChangesStart} \text{pick random } \mu, \hat{\mu} \\
        \binCom = \binCom + \mu \cdot [(\tau^n-1)] \cdot h \\
        \hat{\binCom} = \hat{\binCom} + \hat{\mu} \cdot [(\tau^n-1)] \cdot g \\
        \gamma_3 = \fieldHash'(\binCom, \hat{\binCom}) \\
        U = U + \mu \cdot \binPoly(\tau) \cdot h + \hat{\mu} \cdot \binPoly(\tau) \cdot g \pcskipln\\
        \qquad + \mu \cdot \hat{\mu} \cdot [(\tau^n-1) \cdot h] - \hat{\mu} \cdot g \pcskipln\\
        \qquad + \gamma_3 \cdot (\mu \cdot h - \hat{\mu} \cdot g) \\
        \label{line:apkChangesEnd} T = T + \hat{\mu} \cdot \keyCom \\
        \pcreturn (\dbProof, \rangeProof, \sumProof) \text{ in place of } \dbProof
    }
\end{pchstack}
\end{pchstack}
\begin{pchstack}[boxed,space=0.35\baselineskip]
\procedure[linenumbering,skipfirstln,lnstart=23,bodylinesep=\belowLine]{$\VerifyDB(\keyCom, E, \dbCom^*, (\dbProof, \rangeProof, \sumProof); \sigCom)$}{
    \pcCommentLine{omitted: steps from \primVerifyDB{} with modified \APKVerify{} for $\pi_{\apk}$} \\
    \pcparse \rangeProof \text{ as } \left( \polBinCom_0, \dots, \polBinCom_{m-1}, \maskCom^*, M^*, \rangeQuoCom \right) \\
    \gamma_1, \gamma_2 = \fieldHash(\polBinCom_1, \dots, \polBinCom_m) \\
    \label{line:rangeQuoComVerify} \pccheck e(\rangeQuoCom, \left[(\tau^n-1) \cdot \hat{g}\right]) = \textstyle\sum_{i \in [0,m-1]} e\left(\gamma_1^i \cdot (\polBinCom_i - g), \hat{\polBinCom}_i\right) + e\left(\textstyle\sum_{i \in [0,m-1]} \gamma_2^i \cdot \polBinCom_i, \hat{g}\right) - e\left(g, \textstyle\sum_i \gamma_2^i \cdot \hat{\polBinCom}_i\right) \\
    \label{line:rangeRecompVerify} \pccheck e(g, \dbCom^*) - e\left(\textstyle\sum_{i \in [0,m-1]} \polBinCom_i \cdot 2^i, \hat{g} \right) = e(h, \maskCom^*) + e\left(M^*, [(\tau^n-1) \cdot \hat{g}] \right) \\
    \pcparse \sumProof \text{ as } \left( \totalCom, \polQuoCom \right) \\
    \pccheck e\left(g , \dbCom^* - \totalCom/n \right) = e\left( \polQuoCom, \tau \cdot \hat{g} \right)
}
\end{pchstack}
\end{pcvstack}
\caption{\ac{prim}-based \ac{model} with unrolled range proof.
$\fieldHash,\fieldHash' : \{0,1\}^* \rightarrow \FF$ are hash functions.
\label{fig:ppolConstruct}}
\end{figure*}
To see why this simulator's output is indistinguishable from the real provider's output---even though the simulator has no access to honest user's values---, first note that if no \enquote{\textbf{honest value update}} action is performed at all, then the output is equal and, therefore, obviously indistinguishable.
Otherwise, we know that at least one honest value update has been performed over the course of the experiment.
We also know that honest users always sign freshly uniformly sampled re-randomization components $\epsilon_i$ along with each value update $\delta_i$.
From this and from the specification in \cref{fig:prim}, we can deduce that each element of the real provider's output that is modified---during $\UpdateDB(i, \delta_i, (\sig_i, \epsilon_i))$---in a way that depends on $\delta_i$, is, in fact, uniformly distributed over $\GG$ or $\hat{\GG}$.
But by our definition of $\Sim_5$, the corresponding elements in the simulator's output are distributed the same way.

\section{PPoL Details}\label{sec:ppolDetails}

\cref{fig:ppolConstruct} shows the modifications that turn the \ac{prim} from \cref{sec:primConstruct} into a \ac{model} in full detail.
At a high level, \UpdateDB{} implements the incremental computation of the sum proof (\cref{sec:ppolConstruct}), while the modified \EndEpoch{} implements the range proof and \VerifyDB{} is augmented to check both the sum proof as well as the range proof.
In \UpdateDB{}, besides the sum proof (line 4), we also make explicit the bookkeeping of the database interpolant $\dbPoly(x)$ (line 1), of the committed interpolant of randomizers $\maskCom$ (line 2) and of the committed sum of liabilities $\totalCom$ (line 3).

\subheading{Range proof.}
The main idea of the range proof follows \cite[Appendix B]{libert23} in that we commit to the bits of values in the database.
These bit commitments must undergo binarity proofs (\cref{sec:quotients}) and are, therefore, given in both groups.
To preserve privacy, the commitments are hidden with $\mu_i$-/$\hat{\mu_i}$-multiples of the vanishing polynomial $x^n-1$.
These added multiples must also be taken into account when verifying that two commitments in $\GG$ and $\hat{\GG}$ commit to the same bit polynomial.
For $\GG$-commitments, we add the scaled vanishing polynomial using $h$ since the $n$th power of $\tau$ is not available on $g$.

\subheading{Binarity and equivalence.}
For each bit polynomial $\polBin_i$, the required binarity identity (checked \enquote{in the exponent}) is now

\begin{align*}
& (\polBin_i + \mu_i \cdot (x^n-1) - 1) \cdot (\polBin_i + \hat{\mu}_i \cdot (x^n-1)) \\
=&\ \polBin_i^2 - \polBin_i + (\polBin_i \cdot (\mu_i + \hat{\mu}_i) \\
&\quad + \mu_i \cdot \hat{\mu}_i \cdot (x^n-1) - \hat{\mu}_i) \cdot (x^n-1).
\end{align*}

We know that $x^n-1$ divides $\polBin_i^2 - \polBin_i$ as $\polBin_i$ is binary-valued.
Thus, the quotient-based binarity proof for a privacy-preserving bit commitment is a commitment to
$$
(*) = \frac{\polBin_i^2 - \polBin_i}{x^n-1} + \polBin_i \cdot (\mu_i + \hat{\mu}_i) + \mu_i \cdot \hat{\mu}_i \cdot (x^n-1) - \hat{\mu}_i.
$$
Next, to ensure that a hidden $\GG$-commitment $\polBinCom_i$ and a hidden $\hat{\GG}$-commitment $\hat{\polBinCom_i}$ contain the same polynomial, we prove that their discrete logarithms only differ by some multiple of $x^n-1$.
Specifically, we use a commitment to
\[
(**) = \mu_i - \hat{\mu_i}
\]
that is compared to $e(\polBinCom_i, \hat{g})-e(g, \polBinCom_i)$ via the pairing.

The verification equations for the binarity proof and for the equivalence proof can be batched together because both perform a check against $x^n-1$.
That is, for randomly chosen $\gamma_1, \gamma_2$, the previous two checks are replaced with a single check against the combined proof
$\gamma_1 \cdot (*) + \gamma_2 \cdot (**).$
Using powers of $\gamma_1$ and $\gamma_2$, we extend this batching strategy to also combine the proofs for the $m$ different $\polBinCom_i$.
Overall, we have described the computation of $\rangeQuoCom$ in line \ref{line:rangeQuoCom} in \cref{fig:ppolConstruct}.
The corresponding verifier check is in line \ref{line:rangeQuoComVerify}.

\subheading{Dealing with $\maskPoly$.}
The range proof must authenticate $\sum_i 2^i \cdot \polBinCom_i$ against the database commitment.
Recall that if $\dbPoly$ interpolates all users' balance values, and $\maskPoly$ interpolates all users' (re-)randomizers, then the public database commitment is $\dbCom^* = \dbPoly(\tau) \cdot \hat{g} + \maskPoly(\tau) \cdot \hat{h}.$
But computing $\sum_i 2^i \cdot \polBinCom_i$ gives $\dbPoly(\tau) \cdot g + \sum_i 2^i \cdot \mu_i \cdot (\tau^n-1) \cdot h$.
Clearly,
\begin{align}\label{eq:polBitDecomp}
\log_g\left(\sum_i 2^i \cdot \polBinCom_i\right) \neq \log_{\hat{g}}\left(\dbCom^*\right),
\end{align}
which might seem like an impediment to the recomputation-based range proof.
But observe that the bit decomposition check can be modified to allow the left-hand side and right-hand side of \cref{eq:polBitDecomp} to differ (only) by \textbf{(1)} a multiple of $h$ and \textbf{(2)} a multiple of $x^n-1$.
Stated like this, the \enquote{allowed} multiple of $h$ could be set to $\maskCom = \maskPoly(\tau) \cdot \hat{g}$ and the multiple of $x^n-1$ could be computed based on $\sum_i 2^i \cdot \mu_i$.
But if the prover sent out $\maskCom$ directly, the values of $\maskPoly$ would not be hidden.
Thus, $\maskCom$ is blinded by $\gamma_0 \cdot (x^n-1)$ for a random $\gamma_0$.
This yields a range proof that is adjusted to $\maskPoly$ without revealing private values.
Direct calculation shows that $\maskCom^*$ and $M^*$ as computed in lines \ref{line:rangeRecompStart} and \ref{line:rangeRecompEnd} of \cref{fig:ppolConstruct} are the resulting allowed multiples.
The verifier check is shown in line \ref{line:rangeRecompVerify}.

\subheading{APK proof.}
The following changes are made to \EndEpoch{} and \APKVerify{} (in \VerifyDB{}) so the APK proof hides active users' indices.
As in the range proof, the APK proof's bit vector ($\binPoly$) is masked with a random multiple of the vanishing polynomial and an equivalence proof of the from $(\mu-\hat{\mu})\cdot g$ is added. Next, the binarity proof ($\binQuoCom$) and the sumcheck proof ($\apkQuoCom$) are modified to take the randomization into account.
This makes all proof components that depend on the bit vector uniformly distributed.
The exact prover computations are shown in lines \ref{line:apkChangesStart}--\ref{line:apkChangesEnd} in \cref{fig:ppolConstruct}.
The verifier hashes the bit commitments to compute $\gamma_3$ and then uses the following combined binarity and equivalence check as part of the \APKVerify{} routine (cf. \cref{sec:apkproof}):
\begin{align*}
e(\binCom - g, \hat{\binCom}) + \gamma_3 \cdot \left(e(\binCom, \hat{g}) - e(g, \hat{\binCom})\right) = e(\binQuoCom, [(\tau^n -1) \cdot \hat{g}]).
\end{align*}
The sumcheck verification step of \APKVerify{} remains unchanged since $\apkQuoCom$ already incorporates the randomization.

\end{document}